\theoremstyle{plain}%
\newtheorem{theorem}{Theorem}
\newtheorem{proposition}[theorem]{Proposition}%
\newtheorem{lemma}{Lemma}
\newtheorem{corollary}{Corollary}
\theoremstyle{remark}%
\newtheorem{remark}{Remark}%
\theoremstyle{definition}%
\numberwithin{equation}{section}
\numberwithin{theorem}{section}
\numberwithin{lemma}{section}
\numberwithin{corollary}{section}
\newcommand{\inn}[2]{\left\langle#1,\,#2\right\rangle}
\DeclareMathOperator{\supp}{supp}
\DeclareMathOperator{\dist}{dist}
\newcommand{\Lap}{\Delta}
\newcommand{\di}{\partial}
\newcommand{\br}[1]{\left\langle#1\right\rangle}
\newcommand{\si}{\sigma}
\newcommand{\eps}{\epsilon}
\newcommand{\g}{\gamma}
\newcommand{\al}{\alpha}
\newcommand{\Cb}{\mathbb{C}}
\newcommand{\Zb}{\mathbb{Z}}
\newcommand{\Rb}{\mathbb{R}}
\newcommand{\one}{\ensuremath{\mathbf{1}}}
\newcommand{\hf}{\ensuremath{\mathfrak{h}}}
\newcommand{\Ga}{\Gamma}
\renewcommand{\l}{\lambda} 
\newcommand{\abs}[1]{\ensuremath{\left\lvert#1\right\rvert}}
\newcommand{\norm}[1]{\ensuremath{\left\lVert#1\right\rVert}}
\newcommand{\sbr}[1]{\left[#1\right]}
\newcommand{\Set}[1]{\left\{#1\right\}}
\newcommand{\md}[6]{\ensuremath{
		\ifinner
		\tfrac{\partial{^{#2}}#1}{\partial{#3^{#4}}\partial{#5^{#6}}}
		\else
		\tfrac{\partial{^{#2}}#1}{\partial{#3^{#4}}\partial{#5^{#6}}}
		\fi
}}
\newcommand{\del}[1]{\Bigl(#1\Bigr)}
\newcommand{\thmref}[1]{Theorem~\ref{#1}}
\newcommand{\secref}[1]{Section~\ref{#1}}
\newcommand{\lemref}[1]{Lemma~\ref{#1}}
\newcommand{\propref}[1]{Proposition~\ref{#1}}
\newcommand{\remref}[1]{Remark~\ref{#1}}
\newcommand{\figref}[1]{Figure~\ref{#1}}
\newcommand{\corref}[1]{Corollary~\ref{#1}}
\definecolor{green}{rgb}{0.0, 0.5, 0.5}
\definecolor{lgray}{gray}{0.9}
\definecolor{llgray}{gray}{0.95}
\definecolor{lllgray}{gray}{0.975}
\newcommand{\cD}{\mathcal{D}}
\newcommand{\cS}{\mathcal{S}}
\newcommand{\cX}{\mathcal{X}}
\newcommand{\cY}{\mathcal{Y}}
\newcommand{\nc}{\newcommand}
\nc{\h}{\delta}
\nc{\G}{\Gamma}
\nc{\et}{\eta} 
\nc{\gam}{\gamma}
\nc{\ka}{\kappa}
\nc{\lam}{\lambda}
\nc{\Lam}{\Lambda}
\nc{\ta}{\tau}
\nc{\w}{\omega}
\nc{\io}{\iota}
\nc{\s}{\sigma}
\nc{\vphi}{\varphi}
\nc{\e}{\epsilon}
\nc{\ran}{\rangle}
\nc{\lan}{\langle}
\renewcommand{\Re}{\mathrm{Re}} 
\renewcommand{\Im}{\mathrm{Im}} 
\nc{\bfone}{{\bf 1}}
\nc{\dd}{\mathrm{d}}
\newcommand{\DETAILS}[1]{}
\DeclareMathOperator{\Ad}{ad}
\newcommand{\ad}[3]{\Ad^{#1}_{#2}(#3)}
\newcommand{\Rem}{\mathrm{Rem}}
\newcommand{\Norm}[1]{{\left\vert\kern-0.25ex\left\vert\kern-0.25ex\left\vert #1 
		\right\vert\kern-0.25ex\right\vert\kern-0.25ex\right\vert}}
\newcommand{\A}{\mathcal{A}}
\begin{document}
 
\title[Dynamical localization]{Upper bounds in non-autonomous quantum dynamics 
}

\author{Jingxuan Zhang}
\address{Yau Mathematical Sciences Center\\
	Tsinghua University\\
	Haidian District\\
	Beijing 100084, China }
\email{jingxuan@tsinghua.edu.cn}

\date{\today}
\subjclass[2020]{35Q41   (primary); 35B40, 35B45, 81U90,  37K06      (secondary)}
\keywords{Schr\"odinger equations; A priori estimates}

\pagestyle{plain}

\begin{abstract}
	We prove upper bounds on outside probabilities for generic non-autonomous Schr\"odinger operators on lattices of arbitrary dimension. Our approach is based on a combination of commutator method originated in scattering theory and novel monotonicity estimate for certain mollified asymptotic observables that track the spacetime localization of evolving states. Sub-ballistic upper bounds are obtained, assuming that momentum vanishes sufficiently fast in the front of the wavepackets. A special case  gives a refinement of the general ballistic upper bound of Radin-Simon's, showing that the evolution of wavepackets are effectively confined to a strictly linear light cone with explicitly bounded slope. All results apply to long-range Hamiltonian with polynomial decaying off-diagonal terms and can be extended, via a frozen-coefficient argument, to generic nonlinear  Schr\"odinger equations on lattices. 



\end{abstract}

\maketitle

	\section{Introduction}

We consider the following  non-autonomous Sch\"odinger equation on $\hf:=\ell^2(\Zb^d),\,d\ge1$:
\begin{align}
	\label{SE}
	i\di_t u  = (H_0 + V(t))u.
\end{align}
Here $H_0$ is a bounded self-adjoint operator on $\hf$ and $V(t)$ is a time-dependent potential.
Assuming only that $V(t)$ is uniformly bounded in $\ell^\infty$ so that \eqref{SE} is globally well-posed on $\hf$, we seek upper bound on the outside probabilities of wavepackets evolving according to \eqref{SE} without additional structural on the potential. To avoid distraction, in what follows we always normalized the $\ell^2$-norm of solutions to \eqref{SE} to one.

The lack of control on $V(t)$ suggests the use of commutators in which the potential does not contribute. A natural candidate is the momentum  operator 
$$
	A=i[H_0,\abs{x}],
$$
first defined as a quadratic form on the natural domain $\cD:=\Set{u\in\hf:\norm{\abs{x}u}<\infty}$.
Clearly, the choice of $V(t)$ has no bearing on $A$. 
Suppose the kernel of $H_0$ satisfies suitable decay condition so that $A$ extends to a bounded operator on all of $\hf$. Then, at the level of expectation, the classical argument of Radin-Simon \cite{RS} yields a general ballistic upper bound:
\begin{align}
	\label{RSest}
	\norm{\abs{x}u_t}\le \norm{\abs{x}u_0}+C t  ,
	\quad t>0.
\end{align}
See also \cite{BdMS}*{Appd.~A.2} for extensions to higher moments.
With no additional structural assumption on $V(t)$, estimate \eqref{RSest} is the only constraint that we are aware of on the dynamical spreading of wavepackets evolving according to such a general equation as \eqref{SE}.

The Radin-Simon upper bound implies, via Markov's inequality, that at the level of probability distribution, solutions to \eqref{SE} with well-localized initial state must decay in the region $\Set{\abs{x}\ge C  t^\al}$ for any $\al>1$. We seek to further quantify the outside probabilities for solutions $u_t,\,t\ge0$ to \eqref{SE}, defined as
\begin{align}
	\label{Pdef}
	P(N,t):=\sum_{\abs{x}> N}\abs{u_t(x)}^2. 
\end{align}
For autonomous Schr\"odinger operator $H=H_0+V$, a  natural first step is to invoke resolvent estimate together with explicit integral representation of the propagator $e^{-itH}$. Indeed,  for the solution to \eqref{SE} generated by an initial state $u_0\in\hf$, the outside probabilities $P(N,t)$  can be conveniently written in terms of the resolvent $R(z):=(z-H)^{-1}$ via Dunford functional calculus as
\begin{align}
	\label{Pdecomp}
	P(N,t)=&\sum_{x\in\supp u_0}\sum_{\abs{y}>N}\abs{u_0(x)}^2\abs{\inn{e^{-itH}\delta_x}{\delta_y}}^2,\\
	\inn{e^{-itH}\delta_x}{\delta_y}=&\frac{1}{2\pi i} \oint_\Ga e^{-itz}{\br{R(z)\delta_x,\delta_y}}\,dz.\label{Dunford}
\end{align}
Here $\Ga$ is any positively oriented contour around the spectrum $\si(H)$.  
Through the integral representation \eqref{Dunford}, upper bounds on $P(N,t)$ are naturally tied to resolvent estimates for $R(z)$, whence spectral properties of the Hamiltonian enter. This is the starting point for many fruitful research on dynamical upper bounds for Schr\"odinger operator, see for instance the general approach of  Damanik--Tcheremchantsev \cites{DT,DTa,GKT} via transfer matrix on $\Zb$ and, more recently, Shamis-Sodin \cite{SSe} for upper bounds in higher dimensions; see also \cites{ZZ,GK}. 
An excellent review on the state of the art can be found in Damanik-Malinovitch-Young \cite{DMY}.

Along this route, a key ingredient is the Combes–Thomas estimate for the resolvent. 
We recall the following formulation  due to Aizenmann \cite{Aiz}*{Lem.~II.1}: 
\begin{lemma}[Combes–Thomas estimate]
	 Take $H=H_0+V$ with $\abs{H_0(x,y)}\le B\exp(-m\delta(\abs{x-y}))$, where $m,\,B>0$ and $\delta(\cdot)$ satisfies, for some $\al,S>0$,
	 \begin{align}
	 	\label{deltaDef}
	 	\int_1^\infty r^{d-1} e^{-(m-\al)\delta(r)}\,dr\le S.
	 \end{align} 
	 Then for any $b>1$, there exists $v=v(\al,B,S,b)>0$ s.th.  
	 \begin{align}
	 	\label{CTest}
	 \abs{\br{R(z)\delta_x,\delta_y}}\le  2e^{- \delta(\abs{x-y})/v},\quad 1\le \dist (z,\si(H))<b.
	 \end{align}
\end{lemma}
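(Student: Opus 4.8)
The plan is to run the classical Combes--Thomas conjugation argument, whose crucial feature in the present setting is that the potential $V$, being a multiplication operator, commutes with the conjugating weight and therefore does not enter the estimate. Fix $x,y\in\Zb^d$, write $c:=\delta(\abs{x-y})$, and for a small parameter $\eta>0$ to be determined introduce the bounded, strictly positive multiplication operator $e^{\eta f}$ attached to $f(w):=\min\Set{\delta(\abs{w-x}),\,c}$. Then $f(x)=0$, $f(y)=c$, $0\le f\le c$, and --- using that $\delta$ is monotone and subadditive with $\delta(0)=0$, together with the $1$-Lipschitz property of $t\mapsto\min\Set{t,c}$ --- one has $\abs{f(w)-f(w')}\le\delta(\abs{w-w'})$ for all $w,w'\in\Zb^d$. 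Conjugating $H$ by $e^{\eta f}$ yields $H_\eta:=e^{\eta f}He^{-\eta f}=H_0^{(\eta)}+V$, where $H_0^{(\eta)}$ has kernel $e^{\eta(f(w)-f(w'))}H_0(w,w')$; since $V$ and $e^{\eta f}$ are both diagonal they commute, so $W:=H_\eta-H=H_0^{(\eta)}-H_0$ involves only $H_0$.

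The heart of the matter is to show that $\norm{W}\le 1/2$ once $\eta$ is small enough, uniformly in $x,y,z$. The kernel of $W$ is $\bigl(e^{\eta(f(w)-f(w'))}-1\bigr)H_0(w,w')$, so by Schur's test $\norm{W}$ is bounded by the supremum over $w$ of the associated row sum (the column sum is estimated identically, since the bound below is symmetric in $w,w'$). Combining $\abs{e^{s}-1}\le\abs{s}e^{\abs{s}}$, the Lipschitz bound on $f$, and the hypothesis $\abs{H_0(w,w')}\le Be^{-m\delta(\abs{w-w'})}$, each row sum is at most $\eta B\sum_{w'}\delta(\abs{w-w'})e^{-(m-\eta)\delta(\abs{w-w'})}$. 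Writing $m-\eta=(m-\al)+(\al-\eta)$ and using $\sup_{t\ge0}te^{-(\al-\eta)t}=[e(\al-\eta)]^{-1}$ (valid for $0<\eta<\al$) to absorb the polynomial factor, this is at most $\tfrac{\eta B}{e(\al-\eta)}\sum_{w'}e^{-(m-\al)\delta(\abs{w-w'})}$, and the remaining lattice sum is controlled by the integral in \eqref{deltaDef} up to a dimensional constant, hence by $C_d(1+S)$. Choosing $\eta=\eta(\al,B,S)>0$ sufficiently small gives $\norm{W}\le1/2$.

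With $\eta$ so fixed, take any $z$ with $\dist(z,\si(H))\ge1$, so that $\norm{R(z)}\le1$. Then $z-H_\eta=(z-H)(I-R(z)W)$ is invertible by a Neumann series, with $\norm{(z-H_\eta)^{-1}}\le\norm{R(z)}/(1-\norm{R(z)W})\le 1/(1-\norm{W})\le2$; the upper bound $\dist(z,\si(H))<b$ serves only to keep $z$ in a fixed annulus and is otherwise inessential here. Finally $R(z)=e^{-\eta f}(z-H_\eta)^{-1}e^{\eta f}$, and since $f$ is real with $f(x)=0$, $f(y)=c$, we have $e^{\eta f}\delta_x=\delta_x$ and $e^{-\eta f}\delta_y=e^{-\eta c}\delta_y$, so $\inn{R(z)\delta_x}{\delta_y}=e^{-\eta c}\inn{(z-H_\eta)^{-1}\delta_x}{\delta_y}$ and therefore $\abs{\inn{R(z)\delta_x}{\delta_y}}\le2e^{-\eta\delta(\abs{x-y})}$; putting $v:=1/\eta$ gives \eqref{CTest}. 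The main obstacle is exactly the uniform bound $\norm{W}\le1/2$: the weight $f$ must at once separate $x$ from $y$ by the full amount $\delta(\abs{x-y})$, stay bounded so that $e^{\pm\eta f}$ are bounded operators, and have increments dominated by $\delta$ of the lattice distance --- which forces the truncated choice above and relies on the regularity of $\delta$ --- and the resulting Schur sum must be traded, via a sliver $e^{-(\al-\eta)\delta(\cdot)}$ of the exponential decay, against the integrability hypothesis \eqref{deltaDef}.
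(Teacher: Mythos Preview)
The paper does not prove this lemma; it is simply quoted from Aizenman \cite{Aiz}*{Lem.~II.1} as background, with no argument supplied. So there is no in-paper proof to compare against.

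Your argument is the standard Combes--Thomas conjugation and is correct in substance. Two small caveats worth recording. First, to obtain the Lipschitz bound $\abs{f(w)-f(w')}\le\delta(\abs{w-w'})$ you use that $\delta$ is monotone, subadditive, and vanishes at the origin; none of these hypotheses are written in the lemma as the paper states it. They are the standing assumptions in Aizenman's original formulation, so the omission is the paper's and not a defect in your proof, but you should flag that you are invoking them. Second, your $\eta$ (hence $v=1/\eta$) depends on the dimension $d$ through the constant $C_d$ that compares the lattice sum $\sum_{w'}e^{-(m-\al)\delta(\abs{w-w'})}$ to the radial integral in \eqref{deltaDef}; the paper lists the dependence as $v(\al,B,S,b)$, and, as you correctly note, $b$ does not actually enter once $\dist(z,\si(H))\ge1$ forces $\norm{R(z)}\le1$. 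These are cosmetic mismatches with the statement rather than gaps in the mathematics.
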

To motivate our main results below, we recall how the Thomas-Combes is used to show that outside probabilities are exponentially suppressed away from a strictly linear effective light cone.  
Indeed, choosing the contour in \eqref{Dunford} to be the boundary of the box $\Set{\abs{\Im z}\le 1,\abs{\Re z} \le \norm{H}+1}$, we conclude from \eqref{CTest} that 
for some $C=C(\norm{H})>0$ and $v=v(\al,B,S,\norm{H})>0$,
\begin{align}
	\label{17}
		\abs{\inn{e^{-itH}\delta_x}{\delta_y}}\le Ce^{t-\delta(\abs{x-y})/v}.
\end{align}
Plugging \eqref{17} back to \eqref{Pdecomp} and setting $R:=\sup\Set{\abs{x}:x\in\supp u_0}$, we find that
\begin{align}
	\label{bench}
\text{	$P(N,t)\le C  e^{- N}$ \quad for \quad 
	$\delta(N-R)> 2vt$.} 
\end{align}
If $u_0$ has compact support and $\delta(r)=r$, then \eqref{bench} yields exponential decay away from the effective light cone $\Set{\abs{x}\le R+ 2vt}$. However,   limitations of this approach based on resolvent estimate arise when
\begin{itemize}
	\item  $u_0$ does not have compact support: In this case  $R=\infty$ in \eqref{bench}.
	\item $H_0$ has polynomially decaying off-diagonal entries, i.e.  $\delta(r)=\log(1+r)$ in \eqref{deltaDef}: In this case \eqref{17} decays only when $\abs{x-y}=O(e^t)$.
	\item The potential is time-dependent: In this case the propagator cannot be written as \eqref{Dunford}. 
\end{itemize}
It is not obvious how to extend the existent method to obtain upper bounds on the outside probabilities in these settings. The present paper aims to supply some new approaches to cover  parts of this gap.

\subsection*{Notation}\label{secNota}
We denote by $\1_X$ the characteristic functions associated to a set $X$, $\dist_X(x):=\inf_{y\in X}\abs{x-y}$ the distance function to $X$,  $\cD(A)$ the domain of an operator $A$, $B_a =\Set{x\in\Zb^d:\abs{x}\le a}$ the ball of radius $a$ around the origin,   and  
$\|\cdot\|$   the norm of  operators on $\hf$ and sometimes that of vectors in $\hf$.
We make no distinction in our notation between a function $f\in\hf$ and the associated multiplication operator $\psi(x)\mapsto f(x)\psi(x)$ on $\hf$. 
The commutator $[A,B]$ of two operators $A$ and $B$ is first defined as a quadratic form on $\cD(A)\cap\cD(B)$ (always assumed to be dense in $\hf$) and then extended to an operator. Similarly, the multiple commutators of $A$ and $B$ are defined recursively by $\ad{0}{B}{A}=A$ and $\ad{p}{B}{A}=[\ad{p-1}{B}{A},B]$ for $p=1,2,\ldots$.

\subsection{Results} 
In the remainder of this paper we consider generic non-autonomous Schr\"odinger operators $H(t)=H_0+V(t)$, where $H_0$ is  long-range self-adjoint operator with power-law decay, satisfying, for some integer $n\ge1$,  
\begin{align}
	\label{Lcond}
	M:=\sup_{x\in\Zb^d}\sum_{y\in\Zb^d}\abs{H_0(x,y)} \abs{x-y}^{n+1}<\infty.
\end{align} 
This condition shows that momentum operator $A=i[H_0,\abs{x}]$ is bounded. Indeed, set
\begin{align}
	\label{Kdef}
	\kappa:=\sup_{x\in\Zb^d}\sum_{y\in\Zb^d}\abs{H_0(x,y)} \abs{x-y}.
\end{align}
By the reveres triangle inequality and Schur's test, we have $\norm{Au}\le \kappa\norm{u}$ for $ u\in\cD(\abs{x})$. Since $\kappa\le M$, it follows that $A$ extends to a bounded operator on $\hf$ under condition \eqref{Lcond}. In particular, the Radin-Simon bound \eqref{RSest} holds. Consequently,  by Markov's inequality, the outside probability decays for  $t\to\infty$ and $N\ge Ct^\al,\,\al>1$. 

Our first result provides an upper bound on $P(N,t)$ with $N\ge Ct^\al$ and $\frac12<\al\le1$. The condition is, roughly speaking, that the momentum is vanishing sufficiently fast in the spacetime region corresponding to the `propagation front', $\Set{(x,t):\abs{x}\sim t^\al, t>0}$. 
More precisely, we prove the following:
\begin{theorem}
	\label{thm1}

	Let \eqref{Lcond} hold with $n\ge1$. Let $\frac12<\al\le1$, $\delta>0$, $v>\frac{\delta}{\al}$, and $\bar v =  v-\frac{\delta}{2\al} $.
	Let $W\in C_c^\infty(\Rb)$ be a smooth cutoff function with $\supp W\subset (0,1)$ and $W\equiv 1$ on $ (\frac14,\frac34)$. Set $W_t(\phi )=W(\frac {2\al }{\delta t^\al} (\phi -\bar vt^\al ))$. 
	
	Assume    $u_t,\,t\ge0 $ is a solution to \eqref{SE} satisfying, for some $R\ge0$,
	\begin{align}
		\label{uCond'}
		\norm{i[H,\dist _{B_R}]W_t(\dist _{B_R})u_t}\le   (v\al-\delta)t^{\al-1}\norm{W_t(\dist _{B_R})u_t},\quad t>0.
	\end{align}
	Then there exists $C=C(n,M,\delta,\al,W)>0$   s.th.~for all $t>0$,
	\begin{align}
		\label{propGen}
		P((vt)^\al+R,t)\le 	(1+Ct^{-\g}){P(R,0)} + C{t}^{-\beta},\quad \g:=2\al-1,\ \beta:=\min\del{(n+1)\g,(n+1)\al -1}.
	\end{align}

\end{theorem}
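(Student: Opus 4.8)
The plan is to bound $P((vt)^\al+R,t)$ by the expectation in $u_t$ of a mollified, $t$-dependent weight that is a function of the scaled distance $\dist_{B_R}(x)/t^\al$, to differentiate it along the flow so that the potential cancels, and then to convert the resulting monotonicity into polynomial decay. Fix a smooth profile $g$ with $g'\ge0$, $g\equiv0$ on $(-\infty,\bar v]$, $g\equiv1$ on $[v,\infty)$, and $g'(r)=\eta_0\,W(\tfrac{2\al}{\delta}(r-\bar v))^2$ with $\eta_0$ normalizing $\int g'=1$; then the transition shell of $g$ is exactly $\{\bar v<r<v\}$ and $\sqrt{g'(\dist_{B_R}/t^\al)}=\sqrt{\eta_0}\,W_t(\dist_{B_R})$, which is what couples the argument to \eqref{uCond'}. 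Put $F_t:=g(\dist_{B_R}/t^\al)$ and $\Psi_t:=\inn{u_t}{F_tu_t}$; then $0\le F_t\le1$ with $F_t=1$ on $\Set{\dist_{B_R}>vt^\al}$, so $P((vt)^\al+R,t)\le\Psi_t$, and as $t\to0^+$ one has $\Psi_t\to P(R,0)$. Since $F_t$ and $V(t)$ are both multiplication operators, $[V(t),F_t]=0$, whence
$$\frac{d}{dt}\Psi_t=\inn{u_t}{(\di_tF_t)u_t}+\inn{u_t}{i[H_0,F_t]u_t}$$
— the potential has cancelled; this is the reason for using a commutator, and $A:=i[H_0,\dist_{B_R}]=i[H,\dist_{B_R}]$ is bounded (by $\k\le M$, cf.\ \eqref{Kdef}).

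\textbf{The differential inequality.} The kinematic term is favorable: $\di_tF_t=-\tfrac{\al}{t}\tfrac{\dist_{B_R}}{t^\al}g'(\dist_{B_R}/t^\al)\le0$ is supported where $\dist_{B_R}/t^\al>\bar v$, so $\inn{u_t}{(\di_tF_t)u_t}\le-\tfrac{\al\bar v}{t}\norm{\sqrt{g'(\dist_{B_R}/t^\al)}u_t}^2=-\tfrac{(v\al-\delta/2)\eta_0}{t}\norm{W_tu_t}^2$ by $\bar v=v-\tfrac{\delta}{2\al}$. For the commutator term, Taylor-expanding the kernel $iH_0(x,y)(g(\dist_{B_R}(y)/t^\al)-g(\dist_{B_R}(x)/t^\al))$ about the midpoint and using that $\dist_{B_R}$ is $1$-Lipschitz, $\norm{g^{(j)}(\cdot/t^\al)}_\infty\lesssim t^{-j\al}$, and \eqref{Lcond} with index $n+1$, yields a self-adjoint decomposition $i[H_0,F_t]=\tfrac12(t^{-\al}g'(\dist_{B_R}/t^\al)A+A\,t^{-\al}g'(\dist_{B_R}/t^\al))+\Rem_t$ with $\norm{\Rem_t}\lesssim Mt^{-(n+1)\al}$ (plus finitely many front-localized intermediate terms of orders $t^{-3\al},t^{-5\al},\dots$, absorbable into an $\bigO{t^{-\al}}$ correction of $\Psi_t$ or simply kept, being integrable since $\al>\tfrac12$). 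Since $\sqrt{g'(\dist_{B_R}/t^\al)}=\sqrt{\eta_0}W_t$, the leading term equals $\eta_0t^{-\al}\inn{W_tu_t}{AW_tu_t}$ up to a doubly commuted error, and Cauchy--Schwarz together with \eqref{uCond'} bounds its absolute value by $\tfrac{(v\al-\delta)\eta_0}{t}\norm{W_tu_t}^2$. Because $v\al-\tfrac\delta2=\al\bar v>v\al-\delta$ — precisely the slack built into the hypotheses $\bar v=v-\tfrac{\delta}{2\al}$, $v>\tfrac{\delta}{\al}$, and the normalization of \eqref{uCond'} — the leading pieces cancel with a surplus, leaving $\tfrac{d}{dt}\Psi_t\le-\tfrac{c_0}{t}\norm{W_t(\dist_{B_R})u_t}^2+\bigO{t^{-(n+1)\al}}$ with $c_0=\tfrac12\delta\eta_0>0$.

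\textbf{From monotonicity to decay.} Integrating this (the range $t\le t_0$ being trivial since the right side of \eqref{propGen} then exceeds $1$) already gives the propagation bound $\int_{t_0}^\infty s^{-1}\norm{W_su_s}^2\,ds\lesssim\Psi_{t_0}+1$, but not decay. To obtain decay I would rerun the computation with $g$ modified to have a slowly growing tail on $[v,\infty)$, so that the kinematic term contributes a Gr\"onwall-type damping $-\tfrac{c_1}{t}\Psi_t$ ($c_1>0$ small, taken $\le c_0$ so the shell terms still absorb the cross term it creates), the part of the commutator living outside the shell being controlled by Cauchy--Schwarz against $\norm{\dist_{B_R}u_t}$, which by Radin--Simon \eqref{RSest} grows at most linearly; this produces $\tfrac{d}{dt}\Psi_t\le-\tfrac{c_1}{t}\Psi_t+\bigO{t^{-\g}}+\bigO{t^{-(n+1)\al}}$, which with the integrating factor $t^{c_1}$ gives a first polynomial decay rate. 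The contribution of the initial datum — using that all $\dist_{B_R}$-moments of $\1_{B_R}u_0$ vanish, so that the small-$t$ part of the error integral is controlled by the higher-moment Radin--Simon estimates of \cite{BdMS} and by the structure of the transition shell near $t=0$ — produces the factor $(1+Ct^{-\g})$ in front of $P(R,0)$. Finally, feeding the decay of $\norm{W_su_s}^2$ back into the error and iterating at most $n+1$ times, the number of moments furnished by \eqref{Lcond}, sharpens the exponent to $\beta=\min\{(n+1)\g,(n+1)\al-1\}$.

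\textbf{Main obstacle.} The cancellation of the potential and the commutator expansion are routine; the crux is the last step — extracting genuine polynomial decay with the sharp $\beta$, since Markov's inequality applied to \eqref{RSest} only yields $O(1)$. The delicate points are arranging the growing tail of the weight so that the Gr\"onwall damping is compatible with the shell estimate \eqref{uCond'}, controlling the error near $t=0$ through the vanishing $\dist_{B_R}$-moments of $\1_{B_R}u_0$, and carrying out the $(n+1)$-fold bootstrap.
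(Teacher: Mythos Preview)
Your setup --- mollifying $\1_{\Set{\dist_{B_R}>vt^\al}}$ by a profile $g$ with $g'=\eta_0W_t^2$, differentiating along the flow so that $V(t)$ drops out, and combining the kinematic term with \eqref{uCond'} to obtain $\tfrac{d}{dt}\Psi_t\le -\tfrac{c_0}{t}\norm{W_tu_t}^2+(\text{lower order})$ --- matches the paper's (\propref{propSymExp} and \corref{corHeis}). The gap is in how you turn this into decay.

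Your route --- give $g$ a slowly growing tail to manufacture a Gr\"onwall term $-\tfrac{c_1}{t}\Psi_t$, control the resulting tail commutator via Radin--Simon, then iterate --- is not what the paper does and is not clearly closable. Once $g'$ is nonzero outside the shell, $i[H_0,F_t]$ picks up a contribution there that \eqref{uCond'} says nothing about; bounding it by Cauchy--Schwarz against $\norm{\dist_{B_R}u_t}=O(t)$ gives an error that does not obviously beat the Gr\"onwall gain, and the subsequent bootstrap to the sharp exponent $\beta$ is left entirely schematic. You correctly flag this as the main obstacle, and the paper sidesteps it altogether.

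The paper's mechanism differs in two structural ways. First, it introduces a \emph{second} parameter $s$ and works with $\A_s(\tau,\chi)=\chi((\phi-\bar v\tau^\al)/s^\al)$, integrating in $\tau$ from $0$ to $t$ with $s$ held fixed and setting $s=\l t$ only at the very end. The commutator errors are then of size $s^{-k\al}$, not $\tau^{-k\al}$, so the $\tau$-integral from $0$ is finite and one lands directly on $\chi[0]\le\norm{\chi}_{L^\infty}P(R,0)$ with no small-$t$ surgery and no appeal to higher-moment bounds. Second --- and this is the point you almost have when you call the intermediate terms ``front-localized'' but then do not exploit --- those terms are bounded not by constants but by observables of the \emph{same kind}: $P_k(\chi)\le C\,\A_s(\tau,\xi_k')$ with $\xi_k\in\cX$ (see \eqref{PkrhoEst}). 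This turns \eqref{intEst1} into a recursive integral inequality (the $s^{-2\al}\!\int_0^t\xi'$ on the right has the same form as the $\int_0^t\chi'$ on the left), and \propref{propME} closes it by a clean induction on $n$, which is exactly what produces $\beta=\min((n+1)\g,(n+1)\al-1)$ --- no Gr\"onwall, no weight modification, no Radin--Simon.
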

This theorem is proved in \secref{secPfthm2}. 

\begin{remark}\label{remN}
	Condition \eqref{Lcond} holds {uniformly for all $n\ge1$}	for operators $H_0$ with finite interaction range, e.g.~the discrete Laplacian $$
		[\Lap u](x)= \sum_{y\sim x} u(y),
$$
	as well as operators with exponentially decaying kernel, i.e. 
$$
		\sup_{x\in\Zb^d}\sum_{y\in\Zb^d} \abs{H_0(x,y)}e^{\abs{x-y}}<\infty .
$$
Alternatively, condition \eqref{Lcond} can be characterized by the norm-differentiability of the operator $e^{-i\xi\cdot x}H_0e^{i\xi\cdot x}$ for $\xi\in\Cb^d$, c.f.~\cites{CJWW,SWb}.
\end{remark}

\begin{remark}

For fixed $\al$, $n$, and cutoff function $W$, the prefactor $C\to\infty$ in \eqref{propGen} as either  $M\to\infty$  or $\delta\to0+$. If $\delta$ is not too small ($\delta\ge2\al$ will do), then the prefactor can be made independent of $\delta$. The precise dependence on the parameters can be tracked  down from \eqref{Cdef} below.
\end{remark}
The proof of \thmref{thm1} is based on commutator estimates, in which the potential $V(t)$ in \eqref{SE} does not contribute to begin with. Consequently, our main estimate \eqref{propGen} is independent of the potential. The full time-dependent Hamiltonian enters only through condition \eqref{uCond'} via a given solution. To better understand this, we note that a necessary condition for \eqref{uCond'} is 
$$\lim_{t\to\infty}\norm{i[H,\dist _{B_R}]W_t(\dist _{B_R})u_t}=0.$$ 
Simple geometric consideration shows that $W_t(\dist_R)$ amounts to a cutoff function associated to the spacetime region
$$
F_t:=	\Set{(x,t)\in\Zb^d \times \Rb_{>0}:( \tfrac14 v + \tfrac 34\bar v) t^\al+R\le \abs{x}\le (\tfrac34 v + \tfrac 14\bar v)t^\al+R}.
$$
Thus \eqref{uCond'} requires, in particular, that the region $F_t$ cannot contain any free wave. Put differently, we need the dynamics to push the solution to zero frequency as $t\to\infty$ in the region $F_t$.

It is in general a difficult to determine the origin of decay for \eqref{uCond'}. This is not unexpected if one views \eqref{SE} as a frozen-coefficient equation arising from nonlinear Schr\"odinger equations, as pointed out in recent works of Liu-Soffer and Soffer-Wu \cites{LS,SW,SWa}.
Determining further sufficient condition for \eqref{uCond'} in terms of the initial state and the potential $V(t)$ is important for applications of \thmref{thm1} to concrete Hamiltonians. This is of independent interest in itself and so we will pursue it elsewhere.

In the meantime, ballistic upper bounds (i.e., $\al=1$) on outside probabilities can be derived from \thmref{thm2} independent of the solutions. Indeed, we prove the following:
\begin{theorem}\label{thm2}
Let \eqref{Lcond} hold with $n\ge1$. Then for any $v>\kappa$ and $\delta:=v-\kappa$,
there exists $C=C(n,M,\delta)>0$ s.th.~for any solution to \eqref{SE},
\begin{equation}\label{propEst'}
		P(vt+R,t)
		\le  (1+Ct^{-1}){P(R,0)} + C{t}^{-n}   ,\quad R\ge0, t>0.
	\end{equation}  
\end{theorem}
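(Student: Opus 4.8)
The plan is to obtain \thmref{thm2} as a corollary of \thmref{thm1}. The only place a solution enters the hypotheses of \thmref{thm1} is the bound \eqref{uCond'}, and I will check that when $\al=1$ this bound holds \emph{automatically} --- for every solution of \eqref{SE} and every $R\ge0$ --- as soon as one sets $v>\kappa$ and $\delta:=v-\kappa$, so that $v\al-\delta=\kappa$.

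First I would note that the potential never contributes to the commutator in \eqref{uCond'}: since both $V(t)$ and $\dist_{B_R}$ act by multiplication, $[V(t),\dist_{B_R}]=0$, so $i[H,\dist_{B_R}]=i[H_0,\dist_{B_R}]$ is time-independent. Next, $\dist_{B_R}$ is $1$-Lipschitz on $\Zb^d$, just like $\abs{x}$: $\abs{\dist_{B_R}(x)-\dist_{B_R}(y)}\le\abs{x-y}$. Hence, writing the kernel
\begin{equation*}
	\bigl([H_0,\dist_{B_R}]w\bigr)(x)=\sum_{y\in\Zb^d}H_0(x,y)\bigl(\dist_{B_R}(y)-\dist_{B_R}(x)\bigr)w(y),
\end{equation*}
the Lipschitz bound gives $\bigl|[H_0,\dist_{B_R}](x,y)\bigr|\le\abs{H_0(x,y)}\,\abs{x-y}$, and Schur's test (using self-adjointness of $H_0$ for the column sum) yields the uniform estimate $\norm{i[H_0,\dist_{B_R}]w}\le\kappa\norm{w}$ for all $w\in\hf$ --- the very estimate already invoked in the introduction to bound $A=i[H_0,\abs{x}]$.

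Applying this with $w=W_t(\dist_{B_R})u_t$ and using $v\al-\delta=\kappa$ immediately gives \eqref{uCond'} for an arbitrary solution $u_t$. It then remains to verify the parameter constraints of \thmref{thm1} at $\al=1$: one needs $\delta>0$ (true, since $v>\kappa$) and $v>\delta/\al=v-\kappa$, i.e.\ $\kappa>0$. If $\kappa=0$, then by \eqref{Kdef} $H_0(x,y)=0$ for $x\ne y$, so $H(t)$ is multiplication by a (real, time-dependent) function, $\abs{u_t(x)}=\abs{u_0(x)}$ for all $x$ and $t$, and hence $P(vt+R,t)=P(vt+R,0)\le P(R,0)$ since $N\mapsto P(N,0)$ is non-increasing; so \eqref{propEst'} holds trivially in that case. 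If $\kappa>0$, I would fix any admissible cutoff $W$ and apply \thmref{thm1}; evaluating its conclusion \eqref{propGen} at $\al=1$ gives $\g=2\al-1=1$, $\beta=\min\bigl(n+1,\,n\bigr)=n$, and $(vt)^\al+R=vt+R$, which is precisely \eqref{propEst'}. With $\al=1$ and $W$ fixed, the constant from \thmref{thm1} depends only on $n,M,\delta$, matching the claim.

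Every step is elementary, so I do not expect a real obstacle. The two points that need a little care are: that the Schur-test bound $\norm{i[H_0,\dist_{B_R}]w}\le\kappa\norm{w}$ is stated on all of $\hf$ (legitimate, as it concerns the bounded extension of the commutator form, which is exactly the object appearing in \eqref{uCond'}); and the separate treatment of $\kappa=0$, which is forced by the constraint $v>\delta/\al$ in \thmref{thm1}.
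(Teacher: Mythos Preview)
Your proposal is correct and follows essentially the same route as the paper: apply \thmref{thm1} with $\al=1$, verify \eqref{uCond'} via the Schur-test bound $\norm{i[H_0,\dist_{B_R}]}\le\kappa$ (the paper cites \lemref{lemA0} for this), and read off $\g=1$, $\beta=n$. Your separate handling of the degenerate case $\kappa=0$, forced by the constraint $v>\delta/\al$ in \thmref{thm1}, is a small extra care that the paper's proof omits.
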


\begin{proof}[Proof of \thmref{thm2} assuming \thmref{thm1}]
	
We apply \thmref{thm1} with $\al=1$. In this case, condition \eqref{uCond'} is time-independent and holds for any solution $u_t$ to \eqref{SE}, owning to the operator norm bound $v-\delta=\kappa \ge \norm{i[H,\dist_{B_R}]}$. (The second inequality is a consequence of Schur's test; see \lemref{lemA0}.)
	Since $\g=1$ and $\beta=n$ for $\al=1$, the desired estimate \eqref{propEst'} follows from  \eqref{propGen}. 
\end{proof}


We now   turn to illustrate some simple applications of the theorems above.

First, we note that if $H_0$ satisfies the decay condition \eqref{Lcond} \textit{uniformly for all $n\ge1$}, (c.f.~\remref{remN}), then applying \thmref{thm1} successively with $n=1,2,\ldots$ shows that  the decay of wavepackets away from the propagation region $\Set{\abs{x}\le R+vt^\al}$ is faster than polynomial of any degree. 
More precisely,  following \cites{GKT,DT,DTa} we define the decay parameter
\begin{align}
	\label{}
	S^+(\al):=-\limsup_{t\to\infty} \frac{\log P(t^\al-1,t)}{\log t },
\end{align}
and the upper transport exponent 
\begin{align}
	\label{}
	\al_u^+:=\sup\Set{\al:S^+(\al)<\infty}.
\end{align}
Then it follows directly from \thmref{thm1} that
\begin{corollary}
	If  \eqref{Lcond} holds {uniformly for all $n\ge1$}, then for any solution satisfying \eqref{uCond'} with $\frac12<\al\le1$, we have
	\begin{align}
		\label{}
		\al_u^+   \le \al.
	\end{align}
\end{corollary}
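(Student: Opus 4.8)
The plan is to deduce the Corollary directly from \thmref{thm1} by a diagonal-type argument over $n$, exploiting the hypothesis that \eqref{Lcond} holds \emph{uniformly in $n\ge1$}. Fix a solution $u_t$ satisfying \eqref{uCond'} for some $\al\in(\tfrac12,1]$ and some $R\ge0$, and suppose first $\al<1$. For each integer $n\ge1$, \thmref{thm1} applies (the parameters $\delta,v,\bar v,W$ from the statement are fixed by the solution, not by $n$, and $C$ depends on $n,M,\delta,\al,W$ but stays finite for each fixed $n$), and yields
\begin{align}
	\label{cordecay}
	P\bigl((vt)^\al+R,\,t\bigr)\le (1+C_n t^{-\g})P(R,0)+C_n t^{-\beta_n},\qquad \beta_n=\min\bigl((n+1)\g,\ (n+1)\al-1\bigr),
\end{align}
with $\g=2\al-1>0$ independent of $n$. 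The key point is that $\beta_n\to\infty$ as $n\to\infty$, so the bound \eqref{cordecay} forces $P((vt)^\al+R,t)$ to decay faster than any polynomial — but only in the regime where $P(R,0)=0$, i.e.\ when the initial datum is supported in $B_R$; in general the first term $(1+C_n t^{-\g})P(R,0)$ does not go to zero. I will address this in the next paragraph.

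To remove the $P(R,0)$ obstruction, observe that the conclusion $\al_u^+\le\al$ concerns the decay parameter $S^+(\al')$ for $\al'>\al$, which involves $P((t^{\al'}-1),t)$. Since $\al'>\al$, for $t$ large we have $t^{\al'}-1\ge (vt)^\al+R$, so by monotonicity of $P(\cdot,t)$ in its first argument,
\begin{align}
	\label{corcompare}
	P(t^{\al'}-1,\,t)\le P\bigl((vt)^\al+R,\,t\bigr)\le (1+C_n t^{-\g})P(R,0)+C_n t^{-\beta_n}.
\end{align}
Now split according to whether $P(R,0)=0$. If the initial datum is not supported in $B_R$, then $P(R,0)>0$ is a fixed positive constant and the right side of \eqref{corcompare} does not tend to zero, so $S^+(\al')=0<\infty$ and hence $\al'\le\al_u^+$ — this actually gives $\al_u^+\ge\al'$ for all $\al'>\al$, the wrong direction. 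So the correct reading must be that the Corollary implicitly takes the initial state localized enough that $P(R,0)$ is negligible, or more precisely that one should interpret $\al_u^+\le\al$ as holding for the class of solutions for which \eqref{uCond'} is combined with, say, compact support of $u_0$. Let me therefore restate the argument under the natural reading: \textbf{assuming additionally $\supp u_0\subset B_R$} (so $P(R,0)=0$), \eqref{corcompare} gives $P(t^{\al'}-1,t)\le C_n t^{-\beta_n}$, whence $S^+(\al')\ge\beta_n$ for every $n$; letting $n\to\infty$ and using $\beta_n\to\infty$ shows $S^+(\al')=\infty$, i.e.\ $\al'\notin\{\alpha : S^+(\alpha)<\infty\}$, hence $\al_u^+\le\al'$ for every $\al'>\al$, so $\al_u^+\le\al$. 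The boundary case $\al=1$ is handled the same way using \thmref{thm2} in place of \thmref{thm1} (again with $\beta=n\to\infty$), or simply by noting $\al_u^+\le1$ is already a consequence of the Radin--Simon bound \eqref{RSest}.

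The main obstacle is thus not analytic but interpretive: one must pin down exactly which solutions the Corollary quantifies over, since the term $(1+C_nt^{-\g})P(R,0)$ in \eqref{propGen} is a genuine floor that prevents super-polynomial decay unless $P(R,0)=0$. Once the statement is read as a bound on solutions with appropriately localized initial data (or, alternatively, once one replaces $P((vt)^\al+R,t)$ by $P((vt)^\al+R,t)-P(R,0)$ throughout), the proof is the routine diagonal argument above: apply \thmref{thm1} for each $n$, use $\beta_n\to\infty$, and compare $P(t^{\al'}-1,t)$ to $P((vt)^\al+R,t)$ by monotonicity for $\al'>\al$ and $t$ large. No new estimate beyond \thmref{thm1} is needed; the only care required is to check that the constants $C_n$, while blowing up with $n$, are finite for each fixed $n$ and that $\g=2\al-1$ stays bounded below away from $0$, both of which are immediate from the hypotheses.
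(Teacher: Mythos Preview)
Your approach is exactly the one the paper indicates: it offers no formal proof beyond the sentence preceding the Corollary, ``applying \thmref{thm1} successively with $n=1,2,\ldots$ shows that the decay \ldots\ is faster than polynomial of any degree,'' and your argument (apply \eqref{propGen} for each $n$, use $\beta_n\to\infty$, then compare $P(t^{\al'}-1,t)$ to $P((vt)^\al+R,t)$ by monotonicity for $\al'>\al$) is precisely the intended unpacking of that sentence.

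Your interpretive observation is also well taken and is not something the paper addresses: the floor term $(1+C_nt^{-\g})P(R,0)$ in \eqref{propGen} genuinely prevents the conclusion $S^+(\al')=\infty$ unless $P(R,0)=0$, so the Corollary as stated implicitly assumes the initial state is supported in $B_R$ (or at least that $P(R,0)=0$). One minor correction to your write-up: when $P(R,0)>0$, the bound \eqref{corcompare} merely becomes uninformative---it does \emph{not} imply $S^+(\al')=0$, since an upper bound that fails to decay says nothing about whether $P(t^{\al'}-1,t)$ itself decays. So you cannot deduce ``the wrong direction'' from the estimate; you simply cannot deduce anything. With that caveat, your reading---that the Corollary should be understood under the additional hypothesis $\supp u_0\subset B_R$---is the correct one, and your proof under that hypothesis is complete.
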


Next, we prove an upper bound on higher moments of the position operator away from the propagation region. For $r\ge 0$, let
\begin{align}
	\label{HpDef}
	P_r(N,t):=  \sum_{\abs{x}>N} \abs{x}^{r}\abs{u(x,t)}^2. 
\end{align}
Then $P(N,t)\equiv P_0(N,t)$. In general, $P_r,\,r>0$ is naturally related to the Sobolev/diffusive norms; see e.g. \cites{BW,ZZ}.
We prove the following:
\begin{theorem}\label{thm3}
	Assume \eqref{propEst'} holds  with $n>dr_0/2$ for some $r_0> 0$, and assume $u_t,\,t\ge0$ solves \eqref{SE} with   initial state  satisfying 
	\begin{align}
		\label{q0Suff}
		P_{r_0}(0,0)\le B.
	\end{align}
	Then for any $0\le r<r_0$, there exists $C=C (n,M,B,d, v,r,r_0 )>0$ such that for all  $t\ge1$, 
	\begin{align}
		\label{MainEst}
		P_r(2vt,t)\le C. 
	\end{align}
\end{theorem}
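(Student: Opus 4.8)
The plan is to bootstrap the tail bound \eqref{propEst'} into a weighted tail bound by slicing the region $\Set{\abs{x}>2vt}$ into dyadic annuli and applying \eqref{propEst'} on each annulus at a suitable later time, exploiting the margin between $2vt$ and $vt$. Concretely, I would write
\begin{align*}
	P_r(2vt,t)=\sum_{\abs{x}>2vt}\abs{x}^r\abs{u_t(x)}^2
	\le \sum_{k\ge0}(2^{k+1}\cdot 2vt)^r\sum_{2^k\cdot 2vt<\abs{x}\le 2^{k+1}\cdot 2vt}\abs{u_t(x)}^2
	\le \sum_{k\ge0}(2^{k+2}vt)^r\,P(2^{k+1}vt,t).
\end{align*}
For each $k$, the threshold $2^{k+1}vt$ is of the form $vt'+R$ with $R=0$ and $t'=2^{k+1}t\ge t$, so \eqref{propEst'} applied at time $t$ with this $R$ is too crude; instead I would use \eqref{propEst'} directly at the original time $t$ with $R=(2^{k+1}-1)vt$, which is legitimate since $R\ge0$, giving
\begin{align*}
	P(2^{k+1}vt,t)=P(vt+R,t)\le (1+Ct^{-1})P(R,0)+Ct^{-n}=(1+Ct^{-1})P((2^{k+1}-1)vt,0)+Ct^{-n}.
\end{align*}

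Next I would control the two resulting sums. For the initial-data term, $P((2^{k+1}-1)vt,0)=\sum_{\abs{x}>(2^{k+1}-1)vt}\abs{u_0(x)}^2\le ((2^{k+1}-1)vt)^{-r_0}P_{r_0}(0,0)\le B((2^{k+1}-1)vt)^{-r_0}$ by Markov's inequality applied to the weight $\abs{x}^{r_0}$ (valid for $t\ge1$ so the threshold is positive). Hence $\sum_k (2^{k+2}vt)^r P((2^{k+1}-1)vt,0)\lesssim B(vt)^{r-r_0}\sum_k 2^{k(r-r_0)}$, which converges because $r<r_0$, yielding a bound $\lesssim B(vt)^{r-r_0}$ that is uniformly bounded for $t\ge1$. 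For the remainder term, $\sum_k (2^{k+2}vt)^r\,Ct^{-n}$ does \emph{not} converge in $k$, so here I would truncate the dyadic sum: since $\|u_t\|=1$, only annuli with $2^k\cdot 2vt\le$ (something like $t$, using \eqref{RSest}/\eqref{propEst'} itself to see $P(N,t)=0$ effectively for $N$ beyond the light cone) contribute, giving at most $O(\log t)$ relevant values of $k$ with $2^k\lesssim t/(vt)\sim 1/v$ — more carefully, $P(2^{k+1}vt,t)$ is bounded by $\|u_t\|^2=1$ always, and is $\le Ct^{-n}$ only once $2^{k+1}vt\ge vt$, i.e. for all $k\ge0$; the genuine decay in $k$ must come from combining $P(N,t)\le 1$ with the crude observation that $P(N,t)\le (2vt/N)^{r}P_r(2vt,t)$ cannot be used circularly, so instead I would use that $\sum_{\abs x>2^{k+1}vt}\abs{u_t(x)}^2$ with $\|\abs{x}u_t\|\le \|\abs{x}u_0\|+\kappa t\lesssim_B t$ gives $P(2^{k+1}vt,t)\lesssim_B (2^{k+1}vt)^{-2}t^2\lesssim 4^{-k}$, so that $\sum_k (2^{k+2}vt)^r P(2^{k+1}vt,t)\lesssim (vt)^r\sum_k 2^{kr}4^{-k}$ converges since $r<2$ (and if $r\ge2$ one iterates with higher Radin–Simon moments, or simply notes $r<r_0$ and $n>dr_0/2$ allow choosing the comparison moment large enough). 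Taking the minimum of the two estimates annulus-by-annulus before summing gives a clean $t$-independent bound.

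The remaining bookkeeping is to combine: split each $P(2^{k+1}vt,t)$ via \eqref{propEst'} into the data part (summable in $k$, bounded by $CB$) and the $Ct^{-n}$ part, and for the latter pair it against the competing bound $\lesssim_B 4^{-k}$ (or a higher-moment analogue) so that $\min(Ct^{-n},\,C_B 4^{-k})$ summed against $2^{kr}(vt)^r$ is $\lesssim_B 1$: for $k$ small use $Ct^{-n}$ and note there are $O(\log t)$ such terms each $\lesssim t^{r}t^{-n}$, which is bounded provided $n\ge r$ — and indeed $n>dr_0/2\ge r_0/2>r/2$ may not immediately give $n\ge r$, so I would instead always use the $4^{-k}$-type bound for the remainder contribution, making the sum $\sum_k 2^{kr}4^{-k}(vt)^r \cdot(\text{const})$, which is finite and $t$-independent once one absorbs the $t^2$ from the second-moment bound against nothing — wait, that leaves a $t^r$; the correct fix is that the second-moment bound gives $P(2^{k+1}vt,t)\le (\|\abs x u_0\|+\kappa t)^2(2^{k+1}vt)^{-2}$ and $\|\abs x u_0\|^2\le P_{r_0}(0,0)^{?}$ is not directly $\le B$ unless $r_0\ge2$, so in general I would use the $r_0$-th Radin–Simon moment from \cite{BdMS}*{Appd.~A.2}, namely $\|\abs{x}^{r_0/2}u_t\|\lesssim_B t^{r_0/2}$ when $r_0\le 2$ or the analogous polynomial bound, giving $P(2^{k+1}vt,t)\lesssim_B (2^{k+1}vt)^{-r_0}t^{r_0}=2^{-kr_0}v^{-r_0}$, and then $\sum_k(2^{k+2}vt)^r\cdot 2^{-kr_0}v^{-r_0}\lesssim_B (vt)^r v^{-r_0}\sum_k 2^{k(r-r_0)}<\infty$ — but this still has $t^r$. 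The resolution of this apparent $t^r$ growth is the crux: it disappears because for the \emph{first few} annuli (those with $2^{k+1}vt\lesssim 2vt$, i.e. $k=0$ only) one must use \eqref{propEst'}'s data term $P(R,0)$ with $R\sim vt$, which by \eqref{q0Suff} is $\le B(vt)^{-r_0}$, killing the $t^r$ since $r<r_0$; and the competing-bound argument is only needed to make the sum over $k$ converge, not to beat powers of $t$. So the final structure is: (i) dyadic decomposition; (ii) \eqref{propEst'} per annulus, producing a data term and an $O(t^{-n})$ term; (iii) Markov on the data term using \eqref{q0Suff} to get $(vt)^{r-r_0}$-decay, summable in $k$; (iv) for the $O(t^{-n})$ terms, use $\|u_t\|_2=1$ to cap contributions from far annuli and the hypothesis $n>dr_0/2$ together with $r<r_0$ to see the total is $O(1)$; (v) sum up.

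The main obstacle I anticipate is exactly step (iv): cleanly showing that the aggregate of the $Ct^{-n}$ error terms across all dyadic annuli, weighted by $\abs{x}^r$, stays bounded uniformly in $t$. The tension is that $P(N,t)\le Ct^{-n}$ holds only for $N\ge vt$ (with $R=0$), so every annulus beyond $2vt$ inherits this, and naively $\sum_k (2^k vt)^r t^{-n}$ diverges in $k$; one genuinely needs a second, $k$-decaying bound on $P(2^{k+1}vt,t)$ — provided by a polynomial Radin–Simon moment bound (the weight $\abs{x}^{r_0}$ version, which is available since \eqref{Lcond} with $n>dr_0/2$ controls enough moments, cf.~the reference to \cite{BdMS}) — and then to take, annulus by annulus, the minimum of $Ct^{-n}$ and this decaying bound before summing. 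Getting the two exponents ($n$, and the moment order used for the competing bound) to cooperate under the single hypothesis $n>dr_0/2$ is where the arithmetic has to be done carefully; I expect the factor $d$ enters precisely through comparing $\ell^2$ mass in an annulus of radius $\sim 2^k vt$ in $\Zb^d$ against the available moment, i.e.\ through the volume growth $(2^k vt)^{d}$ of lattice balls.
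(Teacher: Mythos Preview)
Your dyadic decomposition and the treatment of the initial-data contribution via Markov's inequality and \eqref{q0Suff} are correct and match the paper. The gap is precisely where you locate it: applying \eqref{propEst'} with the \emph{fixed} speed $v$ and $R=(2^{k+1}-1)vt$ gives an error $Ct^{-n}$ with $C$ independent of $k$, so $\sum_k (2^{k}vt)^r\cdot Ct^{-n}$ diverges in $k$. Your proposed repair via a Radin--Simon moment bound does not close the argument for the full range $0\le r<r_0$: the competing estimate $P(2^{k+1}vt,t)\lesssim_B 2^{-(k+1)r_0}$ tames the $k$-sum only at the price of an overall factor $t^r$, and if you optimize the crossover $k^*$ between the $t^{-n}$ bound and the $2^{-kr_0}$ bound you find uniform boundedness only when $r\le nr_0/(n+r_0)$, which is strictly smaller than $r_0$. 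So the min-of-two-bounds strategy, as sketched, cannot reach the stated conclusion.

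The paper's fix is simpler and avoids Radin--Simon entirely: at the $k$-th scale apply \eqref{propEst'} with the \emph{rescaled} speed $v'=2^{k-1}v>\kappa$ and $R=2^{k-1}vt$, so that $v't+R=2^kvt$. The point is that the constant in \eqref{propEst'}, traced back through \eqref{Cdef}, carries a factor $\lambda^{-n}$ with $\lambda\propto v'-\kappa\sim 2^{k}$; hence the error term itself scales like $(2^{k-1}t)^{-n}$ rather than $t^{-n}$. Now \emph{both} the data term and the error term decay geometrically in $k$, and the weighted sum $\sum_k (2^{k}vt)^r\,Q_k$ converges and is bounded uniformly in $t$ under the stated hypotheses. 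You brushed against this when you wrote the threshold as $vt'$ with $t'=2^{k+1}t$ and then dismissed it because the solution lives at time $t$; the correct realization is to load the factor $2^k$ onto the speed $v$ rather than onto the time. No auxiliary moment input is needed.
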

This theorem is proved in \secref{secPfBall}. 

\begin{remark}
	The generalized Radin-Simon estimate (see e.g.~\cite{BdMS}*{Thm.~A.2}) asserts that  for initial state satisfying $P_r(0,0)\le C$, the full Sobolev norm $P_r(0,t)$ typically grows polynomially as $O(t^r)$. In view of this, estimate \eqref{MainEst} shows  that the growth contribution away from a linearly spreading region is at most $O(1)$. 
\end{remark}

\begin{remark}
	Conversely, if \eqref{MainEst} holds, then by Markov's inequality we have $P(2vt,t)\le Ct^{-r}$, c.f.~\eqref{propEst'}.
\end{remark}

\begin{remark}
	For compactly supported initial state,	condition \eqref{q0Suff} holds 	for all $r_0>0$. Thus estimate \eqref{MainEst} holds for all $r$ if $H_0$ is as in \remref{remN} and $u(0)$ has compact support. This is the setting for many results in the literature where $H_0$ is  the discrete Laplacian and $u_0$ is the point mass at the origin. 
\end{remark}








Lastly, we note that similar to the Radin-Simon ballistic upper bounds, the potential $V(t)$ does not enter our estimates. 
Owning to this generality, by a standard frozen-coefficient argument, our results above extend to generic non-linear, non-autonomous Schr\"odinger equations of the form
\begin{align}
	\label{NLS}
	i\di_t q = \del{H_0 +\bar V(t) +\abs{N(t,q)} }q.
\end{align}
To avoid distraction, for our purpose we \textit{assume} global well-posedness for \eqref{NLS} on $\hf$. Then we have the following:
\begin{theorem}\label{thmNLS}
	Let $H_0$ satisfy \eqref{Lcond}, potential $\bar V(t)$  be uniformly bounded, and  nonlinearity obey  $\abs{N(t,q)}\le C_1$ for $t\ge0,\,\abs{q}\le C_2$.  
	Then the statements of Thms.~\ref{thm1}--\ref{thm3} are valid for solutions to \eqref{NLS}.
\end{theorem}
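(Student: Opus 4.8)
The plan is to run the standard frozen-coefficient reduction: given a solution $q_t$ to the nonlinear equation \eqref{NLS}, freeze the coefficients along this particular trajectory and treat $q_t$ as a solution of a linear non-autonomous equation of the form \eqref{SE}. Concretely, set
$$
V(t):=\bar V(t)+\abs{N(t,q_t)},
$$
regarded as a time-dependent multiplication operator. The point is that $q_t$ satisfies $i\di_t q_t=(H_0+V(t))q_t$ \emph{exactly}, since this is just \eqref{NLS} with the nonlinear term evaluated along $q_t$. Thus $q_t$ is \emph{a} solution to \eqref{SE} with this choice of $V(t)$. The hypotheses of Theorems \ref{thm1}–\ref{thm3} that involve the potential only require $V(t)$ to be uniformly bounded in $\ell^\infty$: this follows from the uniform bound on $\bar V(t)$ together with the assumption $\abs{N(t,q)}\le C_1$ for $\abs{q}\le C_2$, provided the solution stays in the regime $\abs{q_t(x)}\le C_2$ — which it does, since we have normalized $\|q_t\|_{\ell^2}=1$ and hence $\|q_t\|_{\ell^\infty}\le 1\le C_2$ (or, if $C_2<1$, one rescales; in any case the $\ell^\infty$-norm is controlled by the conserved $\ell^2$-norm). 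Consequently $V(t)$ is uniformly bounded and global well-posedness of \eqref{SE} with this $V(t)$ holds, so all the linear results apply to $q_t$.

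The key observation making this work is precisely the one emphasized after the statement of Theorem \ref{thm1}: the estimate \eqref{propGen}, and hence \eqref{propEst'} and \eqref{MainEst}, is \emph{independent of the potential} — $V(t)$ enters the commutator arguments only through the term $i[H,\dist_{B_R}]$, in which the potential cancels because $V(t)$ is a multiplication operator and $\dist_{B_R}$ is a function of $x$, so $[V(t),\dist_{B_R}]=0$ and $i[H,\dist_{B_R}]=i[H_0,\dist_{B_R}]=A$-type term depending only on $H_0$. The full nonlinear/time-dependent structure enters only via the hypothesis \eqref{uCond'} imposed on the given solution. Therefore, applying Theorem \ref{thm1} to $q_t$ (with $V(t)$ as above) gives \eqref{propGen} verbatim for $q_t$ whenever $q_t$ satisfies \eqref{uCond'} with $H=H_0+V(t)$; applying Theorem \ref{thm2} gives the ballistic bound \eqref{propEst'} for $q_t$ with \emph{no} extra condition, since there the condition \eqref{uCond'} reduces to the operator-norm bound $\kappa\ge\|i[H_0,\dist_{B_R}]\|$ which is automatic; and applying Theorem \ref{thm3} gives \eqref{MainEst} for $q_t$ provided the initial state obeys the moment bound \eqref{q0Suff}.

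The only genuinely non-formal point — which I would flag as the main thing to check rather than a true obstacle — is consistency of the well-posedness assumption: one must verify that the linear flow generated by $H_0+V(t)$, with $V(t)$ defined through the \emph{a priori given} nonlinear solution $q_t$, indeed has $q_t$ as its solution with initial data $q_0$. This is immediate by uniqueness for the linear equation (Grönwall, using boundedness of $H_0+V(t)$) once one knows $t\mapsto V(t)$ is, say, strongly measurable and uniformly bounded, which follows from continuity of $t\mapsto q_t$ in $\ell^2$ (hence in $\ell^\infty$) and the hypothesis on $N$. I would therefore structure the proof as: (i) fix a nonlinear solution $q_t$ and define $V(t)$; (ii) note $\|q_t\|_{\ell^\infty}\le\|q_t\|_{\ell^2}=1$, hence $V(t)$ is uniformly bounded and $q_t$ solves the corresponding linear \eqref{SE}; (iii) invoke Theorems \ref{thm1}–\ref{thm3} with this $V(t)$, observing that all their hypotheses not involving $V$ are unchanged and those involving $V$ are either automatic (Theorem \ref{thm2}) or are assumed as hypotheses on $q_t$ itself (Theorems \ref{thm1}, \ref{thm3}); (iv) conclude. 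I would keep the write-up short, since once the setup is stated correctly there is nothing left to compute.
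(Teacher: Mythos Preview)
Your proposal is correct and follows essentially the same frozen-coefficient argument as the paper: fix a nonlinear solution, define $V(t)=\bar V(t)+\abs{N(t,q_t)}$, observe that $q_t$ then solves the linear equation \eqref{SE} with this uniformly bounded potential, and invoke Theorems~\ref{thm1}--\ref{thm3} using the key identity $[V(t),\phi]=0$ so that the commutator estimates depend only on $H_0$. Your write-up is in fact slightly more careful than the paper's in justifying the uniform $\ell^\infty$-bound on $V(t)$ via $\norm{q_t}_{\ell^\infty}\le\norm{q_t}_{\ell^2}$ and in flagging the uniqueness step for the linear flow.
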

This theorem is proved in \secref{secPfNLS}.
The conditions on $V(t)$ and $N(t,q)$ can be further relaxed, so long as \eqref{NLS} and  the frozen-coefficient equation (see \eqref{NLSf} below) are globally well-posed on $\hf$.

Finally, we note that all results above can be extended to general geometry, with the outside probabiliries defined as $P_X(N,t)=\sum_{\dist_X(x)> N}\abs{u_t(x)}^2$ for any $X\subset\Zb^d$, with the corresponding change made in \eqref{phiDef}. 

\subsection{Discussion}

	To the author's knowledge, the results presented in the previous subsection are the first general upper bounds on outside probabilities for {non-autonomous Schr\"odinger operators $H=H_0+V(t)$  with no special structural assumption on $V(t)$.} 	
	This is the  main novelty of the present paper. 	

Our original motivation for considering \eqref{SE} without assuming structural conditions on $V(t)$ is for applications to  nonlinear Schr\"odinger equation of the form \eqref{NLS}. Indeed, according to \thmref{thmNLS}, particle transport is confined to strictly linear light cones for generic non-autonomous, nonlinear Schr\"odinger operator. If we view \eqref{NLS} as the mean-field limit of some quantum many-body dynamical system defined on \textit{bosonic} Fock space over $\ell^2(\Zb^d)$, then this fact is in contrast to the behaviour of the  full many-body system.
Indeed, for  bosonic  many-body Hamiltonian, without imposing special localization condition on the initial state, the dynamical spreading rate of particle transport is in general growing logarithmically in $t$.
For the Bose-Hubbard model with nearest neighbour interaction, recent work of Kuwahara-Vu-Saito \cite{KVS} shows particle transport spreads at the rate $t \log t$; see Result 1 therein. For more general long-range bosonic many-body Hamiltonian with hopping matrix satisfying \eqref{Lcond}, together with M.~Lemm and C.~Rubiliani, we proved in \cite{LRZ} that {for initial state with uniformly bounded particle density}, the particle transport spreads at strictly linear rate; see also Faupin-Lemm-Sigal \cites{FLS,FLSa} and our earlier work with Sigal \cites{SZ,LRSZ}. We also note that  explicit example of super ballistic communication (i.e., propagation of general quantum information instead of particle transport) is known; see e.g.~Eisert-Gross \cite{EG}. From this perspective, the general upper bound from \thmref{thm2} is similar  to the classical Lieb-Robinson bound for spin systems \cite{LR} (see also extensions by Hastings-Koma \cite{HK}, Nachtergaele-Sims \cite{NSa} and the review \cite{NS}), in the sense that it establishes general a priori (i.e., state-independent) constraint on many-body quantum dynamics at the mean-field level.

Various types of localization behaviour for discrete nonlinear Schr\"odinger equations, especially in the presence of potential with a large degree of randomness, have been investigated since the seminal works of Fr\"olich-Spencer-Wayne \cite{FSW}; see also \cites{FKS,FKSa,WZ,CSZ}. More recently long-time Anderson localization is proved for deterministic nonlinear Sch\"odinger equation assuming certain quasi-periodicity condition on the potential and weak nonlinearity \cite{CSW}. An excellent review can be found in \cite{FKSb}. In all of these works, the dynamical upper bounds are  tied to the level of  disorderedness and the strength of nonlinearity in question. 
	
	Finally, we note that similar conditions as \eqref{uCond'} were identified in recent works of Liu-Soffer, Soffer-Wu, and Soffer-Stewart \cites{LS,SW,SWa,SSf}  for what the authors call `weakly localized states' in the context of nonlinear Sch\"odinger equations in the continuum, and diffusive bounds such as $\br{\abs{x}}_t\le Ct^{-1/2}$ were established for such states.

\subsection{Approach}\label{secAppr}
Our approach for proving the main result, \thmref{thm1}, originates from the commutator method developed in the classical works of Sigal-Soffer in $N$-body scattering theory \cites{SS,SSa,SSb,SSc} and the propagation observable method from recent works of Faupin-Lemm-Sigal for particle transport problems in condensed matter theory \cites{FLS,FLSa}. It is also a  refinement of the abstract approach developed in  \cite{Zha}.   

We work in the Heisenberg picture and study the Heisenberg evolution, $\al_t$, for time-dependent families of bounded operators $A(t)$, characterized by the duality relation
\begin{align}
	\label{Aevol}
	\inn{u_0}{\al_t(A(t))u_0}=\inn{u_t}{A(t)u_t} ,
\end{align}
where $u_t,\,t\ge0$ is the unique global solution to the Schr\"odinger equation \eqref{SE} with $u_t\vert_{t=0}=u_0\in\cD$. 
The central idea in our approach is to consider the Heisenberg evolution of the characteristic functions $\1_{E_t}$ associated to the outside region
$$E_t:=\Set{x\in\Zb^d:\abs{x}\ge vt^\al+R}.$$  Our  goal is to show the evolution of the time-dependent observables $\al_t(\1_{E_t})$ is upper bounded (as an operator) by its initial state $\1_{E_0}$, up to an $O(t^{-n})$-remainder. Once this is established, estimate \eqref{propGen} follows by evaluating such an operator inequality at the initial state  via the duality relation \eqref{Aevol}.

Whereas directly estimating  the evolution of $\1_{E_t}$ could be difficult, we seek suitable propagation observables $A(t)$ obeying the desired  monotone envelope while being comparable to $\1_{E_t}$. More precisely, we will construct $A(t)$  such that
\begin{itemize}
	\item$\al_t(A(t))$ is constrained by a time-decaying envelope, with $\al(A(t))\le C (A(0)+{t}^{-\beta}),\,\beta>0$.  
	\item $ A(t) $ is comparable with $\1_{E_t}$, in the sense that  $\1_{E_t}\le A(t)$ for $t>0$ while $A(0)\le \1_{E_0}$. 
\end{itemize}
Combining these and using that $\al_t(\cdot)$ is positivity-preserving, we arrive at the desired operator inequality $\al_t(\1_{E_t})\le C (\1_{E_0}+t^{-\beta})$.

\section{Proof of \thmref{thm1}}\label{secPfthm2}
Our proof is based on the \textit{adiabatic spacetime localization observables} (ASTLO, adopting the terminology of \cites{FLS,LRSZ,LRZ}) method, developed in an abstract Hilbert space setting in \cite{Zha}. The ASTLOs play the role of the propagation observables alluded to at the end of the previous section, and we will start with their construction. Next, we derive commutator expansion and monotonicity estimates, leading to a time-decaying envelope for the ASTLOs. Finally, we use the geometric and decay properties of ASTLOs to conclude the desired estimate \eqref{propGen}.

\subsection{Definition and properties of ASTLOs}
Let $\eps>0$. We define two classes of cutoff functions:
\begin{align}
	\label{Ydef}
	\cY:=\Set{w\in C_c^\infty(\Rb)\mid\supp w \subset(0,\eps),w(y)=1\text{ for }\frac\eps4\le y\le\frac{3\eps}4},
\end{align}
and
\begin{align}
	\label{Xdef}
	\cX:=\Set{\chi\in C^\infty(\Rb)\cap L^\infty(\Rb)\mid \chi= c\int_0^xw^2(y)\,dy\text{ for some }w\in\cY,c>0}.
\end{align}
One can view functions in $\cX$ as mollifications of the  Heaviside function $\one_{\Rb>0}$, with transition region in $(0,\eps)$; see \figref{chifig1}. Below we will use functions in $\cX$ with suitable change of variables to construct the propagation observables that control the outsider probabilities.
\begin{figure}[H]
	\centering
	\begin{tikzpicture}[scale=3]
		\draw [->] (-.5,0)--(2,0);
		\node [right] at (2,0) {$\mu$};
		\node [below] at (.3,0) {$0$};
		\draw [fill] (.3,0) circle [radius=0.02];
		
		
		\node [below] at (1.5,0) {$\eps $};
		\draw [fill] (1.5,0) circle [radius=0.02];

		\draw [very thick] (-.5,0)--(.3,0);
		\draw [very thick] (.3,1)--(2,1);				
		\filldraw [fill=white] (.3,1) circle [radius=0.02];
		
		\draw [dashed, very thick] (-.5,0)--(.75,0) [out=20, in=-160] to (1.5,1)--(2,1);

		\draw [->] (1.55,.5)--(1.3,.5);
		\node [right] at (1.55,.5) {$\chi(\mu)$};
		
		\draw [->] (0,.5)--(.85,.95);
		\draw [->] (0,.5)--(-.05,.05);
		\node [left] at (0,.5) {$\one_{\Rb_{>0}}(\mu)$};
	\end{tikzpicture}
	\caption{A typical function $\chi\in \cX$ with $\norm{\chi}_{\ell^\infty}=1$ compared with the Heaviside function. Here $\eps>0$ is the parameter entering the definition of $\cX$. }\label{chifig1}
\end{figure}
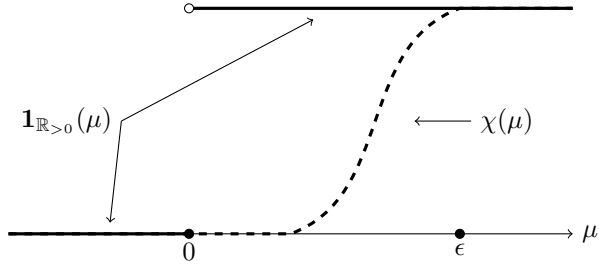

Fix $\bar v>0$ and let $\phi$ be a densely define self-adjoint operator on $\hf$.
For some decay parameter $\al>0$ and adiabatic parameter $s>0$ to be determined later, we define the  {ASTLOs}   as
\begin{equation}\label{Astchi}
	\A_s(t,f):=f\del{\frac{\phi-\bar vt^\al}{s^\al}},\qquad t\ge0,\ f\in L^\infty(\Rb).\tag{ASTLO}
\end{equation} 	
Later on we will mostly choose $\al\in(\frac12,1)$, 
$\phi$ to be the distance function to a ball,  and $s$ to be a linear function of $t$.  But the construction here applies to general self-adjoint operator $\phi$.


The geometric properties of functions in the classes $\cX,\,\cY$ are summarized in the next proposition:
\begin{proposition}
	\label{propGeo}
	Let $\eps,\,\bar v>0$ be as in \eqref{Ydef},  \eqref{Astchi}, respectively. 
	Let $\phi$ be the multiplication operator by some $\phi:\Zb^d\to\Rb$.
	 For  any $v>\bar v$ and   $f\in L^\infty(\Rb)$, set
	\begin{align}\label{Adef}
		s(t):=\del{\frac {v-\bar v}{\eps}}^{1/\al} t,\quad 	 \A(t,f):=\A_{s(t)}(t,f)  .
	\end{align}
	Then we have the following:
	\begin{enumerate}
		\item For  any  $\chi\in\cX$, 
			\begin{align}\label{chi-0s-est} 
			\norm{\chi}_{L^\infty}^{-1} \A(0,\chi)  \le&  \1_{\Set{x\in\Zb^d:\phi(x)>0}},\\
			\label{chi-ts-est}\1_{\Set{x\in\Zb^d:\phi(x)>vt^\al}} \le& \norm{\chi}_{L^\infty}^{-1} \A(t,\chi)\qquad(t>0).
		\end{align}
		
		\item Let $w\in C_c^\infty (\Rb)$. Then  $ w\in\cY$ if and only if
			\begin{align}
			\label{wSupp}
			&\supp\A(t,w)\subset\Set{x\in\Zb^d :  \bar v t^\al\le \phi(x)\le  vt^\al},\\
			\label{wConst}
			&\A(t,w)\equiv 1 \text{ on }\Set{x\in\Zb^d :( \tfrac14 v + \tfrac 34\bar v) t^\al\le \phi(x)\le (\tfrac34 v + \tfrac 14\bar v)t^\al} .
		\end{align}
		
	\end{enumerate}

\end{proposition}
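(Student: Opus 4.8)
The plan is to reduce every asserted operator inequality to a pointwise inequality between symbols. Since $\phi$ is multiplication by a function $\phi:\Zb^d\to\Rb$, and since $s(t)^\al=\frac{v-\bar v}{\eps}t^\al$, each $\A(t,f)=f\big(\frac{\phi-\bar vt^\al}{s(t)^\al}\big)$ is multiplication by $f\circ\ell_t\circ\phi$, where $\ell_t(\mu):=\frac{\eps(\mu-\bar vt^\al)}{(v-\bar v)t^\al}$ is, for $t>0$, an increasing affine bijection of $\Rb$. The only arithmetic I need is that $\ell_t$ maps the interpolated point $(1-\theta)\bar vt^\al+\theta vt^\al$ to $\theta\eps$ for $\theta\in[0,1]$; in particular $\ell_t(\bar vt^\al)=0$, $\ell_t(vt^\al)=\eps$, $\ell_t\big((\tfrac14v+\tfrac34\bar v)t^\al\big)=\tfrac\eps4$, and $\ell_t\big((\tfrac34v+\tfrac14\bar v)t^\al\big)=\tfrac{3\eps}4$. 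Everything then follows by transporting the defining features of $\cX$ and $\cY$ through $\ell_t$.

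For part (1): any $\chi\in\cX$ is nondecreasing with $0\le\chi\le\norm{\chi}_{L^\infty}$, vanishes on $(-\infty,0]$, and equals $\norm{\chi}_{L^\infty}=c\int_0^\eps w^2$ on $[\eps,\infty)$ because $\supp w\subset(0,\eps)$. Hence for $t>0$ the symbol $\chi(\ell_t(\phi(x)))$ is $\le\norm{\chi}_{L^\infty}$ everywhere and vanishes wherever $\ell_t(\phi(x))\le0$, i.e. wherever $\phi(x)\le\bar vt^\al$; in particular it vanishes on $\{\phi\le0\}$, so $\A(t,\chi)\le\norm{\chi}_{L^\infty}\one_{\{\phi>0\}}$. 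Letting $t\downarrow0$, the symbol converges to $\norm{\chi}_{L^\infty}\one_{\{\phi>0\}}$ (since $\ell_t(\mu)\to+\infty$ for $\mu>0$ and stays $\le-\tfrac{\eps\bar v}{v-\bar v}<0$ for $\mu\le0$), which gives \eqref{chi-0s-est} with equality. Conversely, on $\{\phi(x)>vt^\al\}$ we have $\ell_t(\phi(x))>\eps$, whence $\chi(\ell_t(\phi(x)))=\norm{\chi}_{L^\infty}$; combined with $\chi\ge0$ this is exactly \eqref{chi-ts-est}.

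For part (2): if $w\in\cY$, then $w\circ\ell_t\circ\phi$ is supported where $\ell_t(\phi(x))\in(0,\eps)$, i.e. where $\bar vt^\al<\phi(x)<vt^\al$, which gives \eqref{wSupp}; and it equals $1$ where $\ell_t(\phi(x))\in[\tfrac\eps4,\tfrac{3\eps}4]$, i.e. where $(\tfrac14v+\tfrac34\bar v)t^\al\le\phi(x)\le(\tfrac34v+\tfrac14\bar v)t^\al$, which gives \eqref{wConst}. For the converse I would read the same equivalences backwards: since $\ell_t$ is a bijection carrying $[\bar vt^\al,vt^\al]$ onto $[0,\eps]$ and $[(\tfrac14v+\tfrac34\bar v)t^\al,(\tfrac34v+\tfrac14\bar v)t^\al]$ onto $[\tfrac\eps4,\tfrac{3\eps}4]$, conditions \eqref{wSupp}--\eqref{wConst} force $\supp w\subset(0,\eps)$ and $w\equiv1$ on $[\tfrac\eps4,\tfrac{3\eps}4]$, i.e. $w\in\cY$.

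I do not anticipate a genuine obstacle here: the content is just the affine change of variables above. The two spots needing a word of care are the degeneracy $s(0)=0$ in \eqref{chi-0s-est}, handled by the $t\downarrow0$ limit just described (the inequality being closed), and the reverse implication in part (2), which tacitly uses that $\phi$ attains enough values in the relevant intervals to detect them pointwise — automatic for the distance functions $\phi=\dist_{B_R}$ to which the proposition is applied.
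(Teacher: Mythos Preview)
Your argument is correct and follows essentially the same route as the paper: both identify the affine change of variables $\phi_t(x)=\eps\frac{\phi(x)-\bar v t^\al}{(v-\bar v)t^\al}$ (your $\ell_t\circ\phi$) and then read off the claims from the defining properties of $\cX$ and $\cY$. You are in fact slightly more careful than the paper on two points it glosses over --- the degeneracy $s(0)=0$ in \eqref{chi-0s-est}, which you handle by the $t\downarrow0$ limit, and the converse in Part~(2), where you correctly flag that one needs $\phi$ to hit enough values (the paper's applications only use the forward direction anyway).
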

\begin{proof}
For Part (1): \eqref{chi-0s-est} follows from the fact that $\supp  \chi(\cdot)\subset (0,\infty)$ for any $\chi\in\cX$.
 For \eqref{chi-ts-est}, we note that if $s(t)$ is given by \eqref{Adef}, then the argument of $[\A(t,f)](\cdot)$ can be written as
\begin{align}
	\label{phiArgRel}
 \phi_{t}(x):=\frac{\phi(x)-\bar v t^\al}{s(t)^\al }=\eps \frac{\phi(x)-\bar vt^\al}{vt^\al-\bar v t^\al}.
\end{align}
 Thus for any $0\le a <b\le1$, we have $\phi_t(x) \in(a\eps,b\eps)$ if and only if $\phi(x)\in([a(v-\bar v)+\bar v]  t^\al,[b(v-\bar v)+\bar v]t^\al)$. Taking $b=1$, we find that  $\phi_t(x)\ge1$ as long as $\phi(x)\ge vt^\al$. Consequently, by definition \eqref{Xdef}, if $\chi\in\cX$ then  $[\A(t,\chi)](x)=\chi(\phi_t(x))\equiv \norm{\chi}_{L^\infty}$ for $\phi(x)\ge vt^\al$. Estimate \eqref{chi-ts-est} follows from here.

For Part (2):  Taking $(a,b)=(0,1)
	,\,(\frac14,\frac34)
	$ in \eqref{phiArgRel} and then comparing with definition \eqref{Ydef} yields the desired results.
\end{proof}

In view of  \propref{propGeo}, one could consider $\A(t,\chi),\chi\in\cX$ as smoothed cutoff functions associated with the outside region $E_t=\Set{x\in\Zb^d:\dist_{B_R}(x)\ge vt^\al}$, and $\A(t,w)$ as those associated with the `boundary' of $E_t$ of width $(v-\bar v)t^\al$.

In the next two subsections we will prove estimates on $\A_s(t,\chi)$ that are uniform in $s$ and $t$. Then, to make $\A_s(t,\chi)$ comparable with the characteristic functions, we will choose $s$ as function of $t$ according to \propref{propGeo}.

\subsection{Symmetrized commutator expansion}
In this subsection we consider commutators between ASTLOs with $\phi$ given by distance function to the balls,
\begin{align}
	\label{phiDef}
\phi=\dist _{B_R} , \quad R\ge0,
\end{align}and  bounded self-adjoint operators $H_0$ with power-law decay; see \eqref{Lcond}. Our estimates will be uniform in $R$, and so we will not display the dependence on $R$ in notations. Changing the definition of $\phi$ to $\dist_X$ with general $X\subset \Zb^d$ yields bounds on outside probabilities with different geometries.

Following the approach of \cite{BFL+} and \cite{Zha}, our goal  is to derive an operator inequality that controls the commutator $i[H_0,\A_s(t,\chi)]$ in terms of totally symmetrized expressions  involving the ASTLOs and the multiple commutators $\ad{k}{\phi}{H_0}$.
Our main result is the following:
\begin{proposition}
	\label{propSymExp}
Let \eqref{Lcond} hold with $n\ge1$. Then for any  $\chi=\int^xw^2\in \cX$,
there exist bounded operators $P_{k}(\chi),k=2,\ldots,n+1$ and $Q_{n+1}(\chi)$ 
	such that the following operator inequality holds on $\hf$ for all $R\ge0$, $\al>0$, and $s>0,\,t\ge0$:
	\begin{equation}
		\label{symExpMain}
		\begin{aligned}
			 &i[ H_0,\A_s(t,\chi)] \le s^{-\al}\A_s(t,w)i[H_0,\phi]\A_s(t,w)+\sum_{k=2}^{n+1} s^{-k\al}P_{k}(\chi)+s^{-(n+1)\al}Q_{n+1}(\chi).
		\end{aligned}
	\end{equation}

Moreover,  the operators  $P_{k}(\chi),\,k=2,\ldots,n+1$ satisfy, for some $\xi_k\in \cX$ depending only on $k,\,\chi$,   
\begin{align}
	\label{PkrhoEst}
	P_{k}(\chi)\le C_{k,\chi}\sbr{1+ M^2} \A_s(t,\xi_k'),
\end{align}
where $M$ is given by \eqref{Lcond}.

Finally, the operator $Q_{m}(\chi)$ is given by $$Q_{m}:= \frac12(R_{m}(\chi)+R_{m}(\chi)^*),$$ with  $R_{m}$ given by \eqref{RmDef}. It satisfies  
\begin{align}\label{QkrhoEst}
{Q_{m}(\chi)}\le C_{m,\chi}M. 
\end{align}
\end{proposition}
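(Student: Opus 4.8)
The plan is to compute $i[H_0,\A_s(t,\chi)]$ at the level of integral kernels and to peel off the leading symmetrized term by a \emph{midpoint} Taylor expansion, routing everything else into localized majorants. Throughout write $\phi_s:=s^{-\al}(\phi-\bar vt^\al)$, so that $\A_s(t,f)=f(\phi_s)$ and $i[H_0,\phi_s]=s^{-\al}\,i[H_0,\phi]$. The one structural input is the following consequence of \eqref{Lcond}: since $\phi=\dist_{B_R}$ is $1$-Lipschitz, the iterated commutator $\ad{k}{\phi}{H_0}$ has kernel $(\phi(y)-\phi(x))^k H_0(x,y)$, so $\abs{\ad{k}{\phi}{H_0}(x,y)}\le\abs{x-y}^k\abs{H_0(x,y)}$; because the diagonal contributes $0$ and $\abs{x-y}^k\le\abs{x-y}^{n+1}$ off the diagonal, Schur's test gives $\norm{\ad{k}{\phi}{H_0}}\le M$ for every $1\le k\le n+1$, uniformly in $R$ (cf.~\lemref{lemA0}). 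I will argue via the kernel $iH_0(x,y)\bigl(\chi(\phi_s(y))-\chi(\phi_s(x))\bigr)$ of $i[H_0,\A_s(t,\chi)]$; alternatively one runs the Helffer--Sjöstrand expansion of \cite{BFL+},\cite{Zha}, iterating $[(z-\phi_s)^{-1},\ad{k}{\phi}{H_0}]$, which brings down one factor $s^{-\al}$ and one more commutator $\ad{k+1}{\phi}{H_0}$ at each step, but on the lattice the kernel version is cleaner.

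\textbf{Leading term.} First I would Taylor-expand $\chi$ about the midpoint $m:=\tfrac12(\phi_s(x)+\phi_s(y))$: with $h:=\tfrac12(\phi_s(y)-\phi_s(x))$ one has $\chi(\phi_s(y))-\chi(\phi_s(x))=\sum_{k\le n,\ k\text{ odd}}\tfrac{2}{k!}\chi^{(k)}(m)\,h^k+(\text{remainder }O(h^{n+1}))$ --- only odd powers survive, which is precisely what makes each term of \eqref{symExpMain} self-adjoint, using $H_0^\ast=H_0$ and $\ad{k}{\phi}{H_0}^\ast=(-1)^k\ad{k}{\phi}{H_0}$. Since $h^k=2^{-k}s^{-k\al}(\phi(y)-\phi(x))^k$, the $k$-th term is $s^{-k\al}$ times an operator with kernel $\propto\chi^{(k)}(m)\,\ad{k}{\phi}{H_0}(x,y)$, hence of norm $\le C_{k,\chi}M$ by Schur. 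For $k=1$ I use $\chi'=w^2$ together with the elementary identity $w^2(m)=w(\phi_s(x))\,w(\phi_s(y))+O\bigl(\abs{\phi_s(x)-\phi_s(y)}^2\bigr)$ valid for $w\in C^\infty$: the main piece is exactly the kernel of $s^{-\al}\A_s(t,w)\,i[H_0,\phi]\,\A_s(t,w)$ --- the leading term of \eqref{symExpMain} --- while the defect has kernel $\le C s^{-2\al}\abs{x-y}^2\abs{H_0(x,y)}$ (bounding $\min\{\abs{\phi_s(x)-\phi_s(y)},\abs{\phi_s(x)-\phi_s(y)}^3\}\le s^{-2\al}\abs{x-y}^2$), so by Schur it is $O(s^{-2\al}M)$ and will be carried into $P_2$.

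\textbf{The error operators.} It remains to package the errors in the localized form \eqref{PkrhoEst}--\eqref{QkrhoEst}. For $2\le k\le n$ I split the kernel of $\ad{k}{\phi}{H_0}$ into a \emph{near} part ($\abs{x-y}<\rho s^\al$) and a \emph{far} part, with $\rho=\rho(\chi)>0$ small. On the near part $\abs{\phi_s(x)-\phi_s(y)}<\rho$, so $m$ lies within $\rho/2$ of both $\phi_s(x)$ and $\phi_s(y)$; thus $\chi^{(k)}(m)\ne0$ forces $\phi_s(x),\phi_s(y)$ into a $(\rho/2)$-neighbourhood of $\supp\chi^{(k)}\subseteq\supp w\Subset(0,\eps)$, and the near operator $N_k$ satisfies $N_k=w_k(\phi_s)\,N_k\,w_k(\phi_s)$, where $w_k\in\cY$ is chosen to equal $1$ on that neighbourhood and on $[\tfrac\eps4,\tfrac{3\eps}4]$ (possible since both are compact subsets of $(0,\eps)$; $w_k$ depends only on $k,\chi$). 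Hence $\tfrac12(N_k+N_k^\ast)\le\norm{N_k}\,\A_s(t,w_k^2)\le C_{k,\chi}M\,\A_s(t,\xi_k')$ with $\xi_k\in\cX$ chosen so that $\xi_k'=c\,w_k^2$, which is \eqref{PkrhoEst} (the prefactor $C_{k,\chi}(1+M^2)$ is a comfortable majorant; the localized part of the $k=1$ defect is handled identically and absorbed into $P_2$). The far part of the $k$-th term has norm $O_{k,\rho}(M s^{-(n+1-k)\al})$ --- using $\abs{x-y}^k\le(\rho s^\al)^{k-n-1}\abs{x-y}^{n+1}$ and \eqref{Lcond} --- so after the prefactor $s^{-k\al}$ it is $s^{-(n+1)\al}$ times an operator of norm $O(M)$; collecting all such far pieces together with the order-$(n+1)$ Taylor remainder (the operator $R_{n+1}$ of \eqref{RmDef}, with the scalar $s^{-(n+1)\al}$ factored out, whose kernel is $\le\tfrac{C\norm{\chi^{(n+1)}}_{L^\infty}}{(n+1)!}\abs{x-y}^{n+1}\abs{H_0(x,y)}$) and passing to self-adjoint parts (legitimate since $i[H_0,\A_s(t,\chi)]$ is self-adjoint, so the symmetrized sum of errors equals the sum of errors) yields $Q_{n+1}=\tfrac12(R_{n+1}+R_{n+1}^\ast)+\dots$ with $\norm{Q_{n+1}}\le C_{n+1,\chi}M$ by Schur, which is \eqref{QkrhoEst}.

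\textbf{Main obstacle.} The $s$-power bookkeeping is routine once $\norm{\ad{k}{\phi}{H_0}}\le M$ ($k\le n+1$) is in hand; the genuine difficulty --- and the reason for the near/far splitting at scale $\rho s^\al$ and for the precise design of $\cX,\cY$ in \eqref{Ydef}--\eqref{Xdef} --- is to obtain the \emph{operator} inequality \eqref{symExpMain} with the \emph{localized} majorants \eqref{PkrhoEst} \emph{uniformly in $s>0$}, including small $s$, where none of the intermediate operators is itself supported near $\supp w$. Only the polynomial off-diagonal decay \eqref{Lcond} forces the non-local pieces to decay at the rate $s^{-(n+1)\al}$ needed to absorb them into $Q_{n+1}$, while the flatness of the $\cY$-cutoffs on $[\tfrac\eps4,\tfrac{3\eps}4]$ and their compact support strictly inside $(0,\eps)$ make the sandwiching cutoffs $w_k$ admissible with constants $C_{k,\chi}$ independent of $s$ and $R$.
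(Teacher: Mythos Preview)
Your approach is correct and takes a genuinely different route from the paper. The paper deduces \eqref{symExpMain} from \lemref{lemA.2}, a Helffer--Sj\"ostrand commutator expansion: one represents $\A_s(t,\chi)$ via an almost-analytic extension and resolvents $(z-\phi)^{-1}$, expands $i[H_0,\A_s(t,\chi)]$ iteratively into the form \eqref{commex'}, symmetrizes by adding the adjoint, and then sandwiches each sub-leading piece between factors $\A_s(t,v)$, $v\in\cY$, using the quadratic estimate $P^*Q+Q^*P\le P^*P+Q^*Q$. This produces the explicit operators \eqref{Pkdef}--\eqref{Qkdef} and, in particular, the remainder $Q_{n+1}=\tfrac12(R_{n+1}+R_{n+1}^*)$ with $R_{n+1}$ exactly the resolvent integral \eqref{RmDef}. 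You instead Taylor-expand the kernel $\chi(\phi_s(y))-\chi(\phi_s(x))$ about the midpoint and localize the sub-leading terms by a near/far split at scale $\rho s^\al$; this is more elementary (no functional calculus) and delivers self-adjoint errors automatically since only odd powers survive, but it is tied to the lattice setting where kernels are available, whereas \lemref{lemA.2} applies to abstract self-adjoint $\phi$. One caveat: the proposition asserts the \emph{specific} form $Q_{n+1}=\tfrac12(R_{n+1}(\chi)+R_{n+1}(\chi)^*)$ with $R_{n+1}$ the resolvent integral \eqref{RmDef}; your Taylor remainder is not that operator (despite sharing the same Schur bound), and your $Q_{n+1}$ additionally carries all far-field pieces. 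This is harmless downstream --- \corref{corHeis} and \propref{propME} use only the bound \eqref{QkrhoEst} --- but strictly you establish a variant of the proposition rather than the statement verbatim.
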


\begin{proof} 
	Within this proof we fix $R,\,\al,\,s,\,t$, and all estimates will be independent of these parameters.  For simplicity of notation we also omit the $t$-dependence in notations so that $\A_s(\cdot)\equiv \A_s(t,\cdot)$.
	
	1.
Let $\phi$ be given by  \eqref{phiDef} and set
	$$\quad B_{k}:=\ad{k}{\phi}{ iH_0},\quad k=1,\ldots,n+1.$$
	Since   \eqref{Lcond} hold with $n\ge1$ and $\phi=\dist_{B_R}$, by \corref{corA.2}, the $ B_{k}$'s
	are bounded operators  satisfying
	\begin{align}
		\label{BkrhoEst}
		\norm{B_{k}}\le M.
	\end{align}
By \eqref{BkrhoEst}, the assumptions of  \lemref{lemA.2}
 are fulfilled. Therefore it follows from \eqref{535} with $\si=s^\al$ 
 that for $w=(\chi')^{1/2}$  and some $\theta_k\in\cY,\,k=2,\ldots,n+1$,
		\begin{align}
		\label{symExpMain'}
			i[H_0,\A_s(\chi)] \le&   s^{-\al}\A_s(w)B_{1}\A_s(w)+ \sum_{k=2}^{n+1}{{s^{-k\al}}}P_{k}(\chi)+  {s^{-(n+1)\al}} Q_{n+1}(\chi), 
			\end{align}
where, for $k=2,\ldots,n$,  
\begin{align}
	P_{k,\rho}:=&\frac12\del{\frac{1}{(k-1)!}\del{  \A_s(w)^2+ \A_s(w^{(k-1)})B_{k,\rho}^{*}B_{k,\rho}\A_s(w^{(k-1)})}+\frac{1}{k!}\del{(\A_s(\chi^{(k)}))^2+\A_s(\theta_k)B_{k,\rho}^*B_{k,\rho}\A_s(\theta_k)}},\label{Pkdef}\\
	P_{n+1,\rho}:=&\frac12\del{1+\A_s(w)R_n(w)R_n(w)^*\A_s(w)+\sum_{k=2}^n(1+\A_s(\chi^{(k)})R_{n-k+1}(\theta_k)R_{n-k+1}(\theta_k)^*\A_s(\chi^{(k)})}\label{PnDef},\\
	Q_{n+1,\rho}:=&	  R_{n+1,\rho}(\chi)+R_{n+1,\rho}(\chi),\label{Qkdef}
\end{align}
  and $R_{m}$'s are bounded operators defined in \eqref{RmDef}.

  
 Since $B_1=i[H_0,\phi]$ in \eqref{symExpMain'}, the leading term is of the desired form. Thus  it remains to 
  prove estimates  \eqref{PkrhoEst}, \eqref{QkrhoEst}. 
  
  2.     We first prove \eqref{PkrhoEst} for $k=2,\ldots,n$. 
Consider  
   the set
$$\cS_k:=\Set{ w,w^{(k-1)},\chi^{(k)},\theta_k}.$$
Take any $\psi\in\hf$ and $v\in\cS_k$. 
By \eqref{BkrhoEst}, we have
\begin{align}
	\label{218}
	\br{\psi, \A_s(v)B_{k}^{*}B_{k}\A_s(v)}=&\norm{B_{k}\A_s(v)\psi}^2\notag\\
	 \le& M^2 \norm{\A_s(v)\psi}^2.
\end{align}
Now we pick $w_k\in\cY$ and $C_{k,\chi}>0$ such that
$$
		C_{k,\chi} w_k\ge v,\quad v\in\cS_k.
$$
Such $w_k$ is easy to construct. For instance, take  $w_k$ s.th.~$w_k=1$ on $\bigcup_{v\in\cS_k}\supp v$ (which is still contained in $(0,\eps)$) and $w_k=0$ away from $(0,\eps)$, and then take $$C_{k,\chi}:=\max_{v\in\cS_k}\norm{v}_{\ell^\infty}.$$ Note this constant depends only on $\chi$ since the functions in $\cS_k$ depend only on  $\chi$ and its derivatives up to $k$-th order. See details in the proof of \lemref{lemA.2}.
Choose now $\xi_k\in\cX$ with $$\xi_k\ge C_{k,\chi}^2\int^x w_k^2.$$ By construction, we have
$$\A_s(\xi_k')-\A_s(v)^2=\A_s(w_k)^2-\A_s(v)^2\ge0,\quad v\in\cS_k.$$
This, together with \eqref{218}, implies that for any $v\in\cS_k$,
\begin{align}
	\label{}
	&\br{\psi,[\A_s(v)B_{k}^{*}B_{k}\A_s(v)-D_k^2r^{2\beta k}\A_s(\xi_k')]\psi}\\
	=&\notag
	-D_k^2r^{2\beta k}\br{\psi,[\A_s(\xi_k')-\A_s(v)^2]\psi}+\br{\psi,\A_s(v)\sbr{B_{k}^{*}B_{k}-D_k^2r^{2\beta k}}\A_s(v) \psi}\\
	\le& -D_k^2r^{2\beta k}\br{\psi,[\A_s(\xi_k')-\A_s(v)^2]\psi}\le 0 .\notag
\end{align}
Letting $\psi$ vary in $\hf$, we conclude claim \eqref{PkrhoEst} for $k=2,\ldots,n$.

  
	
3. To prove \eqref{PkrhoEst} for $P_{n+1}$, we use the remainder estimate \eqref{RemEst} instead of \eqref{BkrhoEst}, and then we proceed exactly as in Step 2 to conclude the desired result. Finally,  the assertion \eqref{QkrhoEst}  follows from definition \eqref{Qkdef} and estimate  \eqref{RemEst}. 
Thus the proof is complete.
\end{proof}

Consider now the Heisenberg evolution, defined by relation \eqref{Aevol} as the dual evolution to \eqref{SE}. Direct computation shows that the evolution of commutative  norm differentiable families of  bounded operators $A(t)$ is governed by the Heisenberg equation
\begin{align}\label{HE}
	\di_t(\al_t(A(t))) = \al_t(D_H(A(t))),
\end{align} 
where $D_H$ denotes the Heisenberg derivative
\begin{align}
	\label{HeisD}&D_HA(t):=\di_tA(t)+i[H, A(t)],\quad H=H_0+ V(t).
\end{align}
Owning to \propref{propSymExp}, we have
\begin{corollary}\label{corHeis}
Let estimate \eqref{symExpMain} hold with $n\ge1$.  Then  
\begin{align}\label{219}
\di_t\al_t\del{ \A_s(t,\chi) }\le& -s^{-\al}\al_t\del{\A_s(t,w)( \bar v \al t^{\al-1}- i[H_0,\phi]) \A_s(t,w)}\notag\\
&+\sum_{n=2}^{n+1} s^{-k\al}\al_t (P_{k}(\chi))+s^{-(n+1)\al}\al_t(Q_{n+1}(\chi)).
\end{align}
\end{corollary}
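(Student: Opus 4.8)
The plan is to apply the Heisenberg equation \eqref{HE} to the commutative family $A(t)=\A_s(t,\chi)$ and then substitute the symmetrized commutator bound \eqref{symExpMain}. Fix $s>0$. First I would verify that $A(t)$ meets the hypotheses of \eqref{HE}: each $\A_s(t,\chi)$ is a bounded multiplication operator (since $\chi\in L^\infty$) which is a function of the single self-adjoint multiplication operator $\phi=\dist_{B_R}$, so the family is commutative; and since $\chi$ is smooth with bounded derivatives, $t\mapsto\A_s(t,\chi)$ is norm-differentiable with, for $t>0$,
\[
	\di_t\A_s(t,\chi)=\chi'\!\del{\tfrac{\phi-\bar vt^\al}{s^\al}}\cdot\di_t\!\del{\tfrac{\phi-\bar vt^\al}{s^\al}}=-\bar v\al t^{\al-1}s^{-\al}\,\A_s(t,\chi').
\]
Here I use that $s$ is held fixed; the choice $s=s(t)$ of \propref{propGeo} is made only later. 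Since $\chi=\int^xw^2\in\cX$ we have $\chi'=w^2$, and because these are all multiplication operators $\A_s(t,\chi')=\A_s(t,w^2)=\A_s(t,w)^2$.

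Next I would compute the Heisenberg derivative $D_HA(t)=\di_tA(t)+i[H_0+V(t),\A_s(t,\chi)]$. The key point — and the reason no structural hypothesis on $V$ is needed — is that $V(t)$ is a multiplication operator on $\ell^2(\Zb^d)$, hence commutes with the multiplication operator $\A_s(t,\chi)$, so $i[V(t),\A_s(t,\chi)]=0$ and the potential drops out entirely. Combining this with the computation of $\di_tA(t)$ and inserting \eqref{symExpMain} gives
\[
	D_HA(t)\le -\bar v\al t^{\al-1}s^{-\al}\A_s(t,w)^2+s^{-\al}\A_s(t,w)i[H_0,\phi]\A_s(t,w)+\sum_{k=2}^{n+1}s^{-k\al}P_k(\chi)+s^{-(n+1)\al}Q_{n+1}(\chi).
\]
Since $\bar v\al t^{\al-1}$ is a scalar and $\A_s(t,w)^2=\A_s(t,w)\,\A_s(t,w)$, the first two terms combine into $-s^{-\al}\A_s(t,w)\del{\bar v\al t^{\al-1}-i[H_0,\phi]}\A_s(t,w)$.

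Finally I would apply $\al_t$. As recalled at the end of \secref{secAppr}, $\al_t$ is linear and positivity-preserving (it is conjugation by the propagator of \eqref{SE}, which is unitary since $H_0+V(t)$ is bounded self-adjoint), hence it preserves operator inequalities; applying it to the last display and using \eqref{HE} to rewrite the left-hand side yields exactly \eqref{219}. I do not anticipate a genuine obstacle here; the only step requiring (routine) care is the norm-differentiability of $t\mapsto\A_s(t,\chi)$, which rests on the smoothness and boundedness of the cutoffs in $\cX$ together with the functional calculus for the multiplication operator $\phi$, and the observation that the $t^{\al-1}$ prefactor confines the identity, and hence \eqref{219}, to $t>0$ when $\al<1$ (for $\al=1$ it is valid up to $t=0$).
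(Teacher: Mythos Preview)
Your proposal is correct and follows essentially the same approach as the paper: compute $\di_t\A_s(t,\chi)=-\bar v\al t^{\al-1}s^{-\al}\A_s(t,\chi')$, combine with \eqref{symExpMain} to bound $D_H\A_s(t,\chi)$, and then apply the positivity-preserving Heisenberg evolution $\al_t$ via \eqref{HE}. Your explicit remarks that $[V(t),\A_s(t,\chi)]=0$ (both being multiplication operators) and that $\chi'=w^2$ yields $\A_s(t,\chi')=\A_s(t,w)^2$ are details the paper leaves implicit but which are exactly what is used.
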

\begin{proof}
Direct computation shows that $\di_t A_s(t,\chi)=-\bar v \al t^{\al-1}s^{-\al}\A_s(t,\chi')$. This, together with expansion \eqref{symExpMain} and definition \eqref{HeisD}, implies
$$D_H\A_s(t,\chi)\le -s^{-\al}\A_s(t,w)(\bar v \al t^{\al-1}- i[H_0,\phi]) \A_s(t,w)+\sum_{k=2}^{n+1} s^{-k\al}P_{k}(\chi)+s^{-(n+1)\al}Q_{n+1}(\chi).$$
The desired estimate \eqref{219} follows from here, via the Heisenberg equation \eqref{HE} and the positivity-preserving properties of $\al_t(\cdot)$.

\end{proof}

\subsection{Monotone estimates}

Now we use \corref{corHeis} to derive an monotonicity for the ASTLOs. Let $\phi$ be given by \eqref{phiDef}.
For ease of notation, in the remainder of this section we write $$\br{\cdot}_t=\br{u_t,(\cdot)u_t}$$ for a given solution $u_t$ to \eqref{SE}.
Our main result is the following:
	\begin{proposition} \label{propME}
Let estimate \eqref{219} hold with $n\ge1$ and $\frac12<\al\le1$.
Assume moreover  $u_t$ is a solution to \eqref{SE} satisfying, for some $w\in\cY$ and $\delta_1,\,\l>0$,
\begin{align}
	\label{uCond}
\br{\A_s(t,w)i[H_0,\phi]\A_s(t,w)}_t\le (\bar v\al-\delta_1)t^{\al-1}\br{\A_s(t,w)^2}_t,\quad s\ge \l t>0.
\end{align}

Then, for $\chi=\int^x w^2$, 
  there exists  a function  $\eta \in\cX$ (dropped if $n=1$) and some constant $C_4>0$ depending only on $n,\,M$, $\chi$,  $\delta_1$, and $\l$,   such that 
	\begin{align}\label{336}
		 {\br{ (\A_s(t,\chi))}_t}   \le	\br{\A_s(0,\chi)}_0+  {s^{-\g}} \br{\A_s(0,\eta)}_0 +C_4s^{ -\beta},\quad \g:=2\al-1,\ \beta:=\min\del{(n+1)\g,(n+1)\al -1}.
	\end{align}
	(The second term on the r.h.s. is dropped for $n=1$.)
	
\end{proposition}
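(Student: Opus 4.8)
The plan is to integrate the Heisenberg differential inequality \eqref{219} in time, after first choosing the adiabatic parameter to scale linearly, $s = \lambda t$ with $\lambda > 0$ fixed, so that the hypothesis \eqref{uCond} applies along the whole trajectory. With this choice $s^{-k\al}$ becomes $(\lambda t)^{-k\al}$, and the leading-order term in \eqref{219}, namely $-s^{-\al}\al_t(\A_s(t,w)(\bar v\al t^{\al-1} - i[H_0,\phi])\A_s(t,w))$, becomes nonpositive after taking the expectation $\br{\cdot}_t$ and invoking \eqref{uCond}: indeed the bracket is then bounded above by $-\delta_1 t^{\al-1}\br{\A_s(t,w)^2}_t \le 0$. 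So the first step is to observe that, upon passing to expectations and discarding the (negative) leading term,
\begin{align}
	\label{myStep1}
	\di_t \br{\A_s(t,\chi)}_t \le \sum_{k=2}^{n+1}(\lambda t)^{-k\al}\br{P_k(\chi)}_t + (\lambda t)^{-(n+1)\al}\br{Q_{n+1}(\chi)}_t.
\end{align}

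Next I would control each term on the right of \eqref{myStep1} using \propref{propSymExp}. The operator estimate \eqref{QkrhoEst} gives $\br{Q_{n+1}(\chi)}_t \le C_{n+1,\chi}M$, which contributes, after multiplying by $(\lambda t)^{-(n+1)\al}$ and integrating from, say, $1$ to $t$ (or from $0$, handling the singularity since $(n+1)\al > 1$ when $\al > 1/2$, $n \ge 1$), a term of order $t^{1-(n+1)\al}$; note $(n+1)\al - 1 \ge \beta$ as defined. For the $P_k(\chi)$ with $2 \le k \le n$, estimate \eqref{PkrhoEst} gives $P_k(\chi) \le C_{k,\chi}(1+M^2)\A_s(t,\xi_k')$ with $\xi_k \in \cX$, so $\br{P_k(\chi)}_t \le C(1+M^2)\br{\A_s(t,\xi_k')}_t$. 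Since $\xi_k' = w_k^2$ with $w_k \in \cY$, and $\A_s(t,w_k) = w_k(\phi_t)$ is a bounded cutoff, the crude bound $\br{\A_s(t,\xi_k')}_t \le \|\xi_k'\|_{L^\infty}$ already suffices for the top-order terms $k = n, n+1$: multiplied by $(\lambda t)^{-k\al}$ and integrated these give remainders of order $t^{1-(n+1)\al}$ and $t^{1-n\al}$ respectively. The subtlety — and the source of the extra term $s^{-\g}\br{\A_s(0,\eta)}_0$ in \eqref{336} — is the $k = 2$ term: here $(\lambda t)^{-2\al}$ integrates to something of order $t^{1-2\al} = t^{-\g}$ only when we also track its value at $t=0$, i.e. when we bootstrap $\br{\A_s(t,\xi_2')}_t$ back to $\br{\A_s(0,\eta)}_0$ rather than using the $L^\infty$ bound (which would only give $t^{1-2\al}$ with no initial-data gain). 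Concretely, I would apply the whole monotonicity scheme once more with $\chi$ replaced by a suitable $\eta \in \cX$ dominating all the $\xi_k$ (and $w_k^2$), obtaining $\br{\A_s(t,\eta)}_t \le \br{\A_s(0,\eta)}_0 + C t^{1-2\al}$ by the same argument (the lower-order terms in that auxiliary inequality only produce $O(t^{1-2\al})$ or smaller), and then feed this uniform-in-$t$ bound on $\br{\A_s(t,\xi_k')}_t$ back into the integration of \eqref{myStep1}.

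Assembling: integrating \eqref{myStep1} from $0$ to $t$ and using the bounds above,
\begin{align}
	\label{myStep3}
	\br{\A_s(t,\chi)}_t \le \br{\A_s(0,\chi)}_0 + C(\lambda)^{-2\al}\!\!\int_0^t \tau^{-2\al}\br{\A_s(\tau,\xi_2')}_\tau\,d\tau + C\sum_{k=3}^{n+1}\int_0^t \tau^{-k\al}\,d\tau,
\end{align}
where the singularities at $\tau = 0$ are integrable precisely because $k\al \le (n+1)\al$ can exceed $1$; one must be slightly careful and split the integral, using the trivial $L^\infty$ bound near $\tau = 0$ and the refined bound for $\tau \ge 1$, but since $s^{-k\al} = (\lambda\tau)^{-k\al}$ and the powers of $s$ and of $\tau$ match, the $\tau$-integral $\int_1^t \tau^{-k\al}d\tau$ is $O(t^{1-k\al})$ and absorbing the prefactor $s^{-k\al}$ is not needed — rather, one integrates $\di_t\br{\A_s(t,\chi)}_t$ directly with $s = \lambda t$ so every term is an explicit power of $t$. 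The second term yields $s^{-\g}\br{\A_s(0,\eta)}_0$ up to the $O(s^{-2\al})$ correction absorbed into $C_4 s^{-\beta}$ (note $2\al \ge \beta$), the third yields $C_4 s^{-\beta}$ with $\beta = \min((n+1)\g, (n+1)\al - 1)$ — the first alternative arising from the $k=2$ correction term iterated, the second from the $k = n+1$ term. This gives exactly \eqref{336}, with $\eta$ dropped when $n = 1$ since then there are no intermediate $P_k$ and $\g = \beta$.

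The main obstacle I anticipate is bookkeeping the time-integrals near $\tau = 0$ together with the need to express remainders in terms of $s$ rather than $t$: the cleanest route is to substitute $s = \lambda t$ at the very start so \eqref{219} becomes an autonomous-looking ODE inequality in $t$ with explicit powers $t^{-k\al}$, integrate, and only at the end re-express $t = s/\lambda$; the constants $C_4$ then pick up powers of $\lambda$, which is harmless since $\lambda$ is a fixed parameter from \eqref{uCond}. The second delicate point is justifying the bootstrap step — that the auxiliary inequality for $\br{\A_s(t,\eta)}_t$ closes with only $O(t^{1-2\al})$ error — but this is precisely the same computation applied to $\eta$ instead of $\chi$, with the crude $L^\infty$ bounds sufficing for all lower-order terms in that round since we no longer need an initial-data gain there.
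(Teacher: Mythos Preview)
There is a genuine gap. The proposition is stated (and proved in the paper) for \emph{fixed} $s$ satisfying $s\ge\l t>0$; you instead set $s=\l t$ from the outset and let it vary with the integration variable $\tau$. This causes two concrete failures. First, $\A_{s(0)}(0,\chi)$ is undefined since $s(0)=0$, so the initial term $\br{\A_s(0,\chi)}_0$ in \eqref{336} has no meaning in your setup. Second, and more seriously, after substitution your integrands become $(\l\tau)^{-k\al}\cdot(\text{bounded})$, and since $k\al\ge 2\al>1$ for all $k\ge2$, every one of these is non-integrable at $\tau=0$. Your proposed fix --- ``use the trivial $L^\infty$ bound near $\tau=0$'' --- does not help: the $L^\infty$ bound on $\br{P_k}_\tau$ still leaves $\tau^{-k\al}$, which diverges. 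The paper avoids this entirely by keeping $s$ fixed throughout the integration: then the lower-order terms are $s^{-k\al}\int_0^t\xi_k'[\tau]\,d\tau$, finite because $\xi_k'[\tau]\le C$ uniformly. (There is also a minor issue that \eqref{219} was derived for $s$ constant, so an extra $\p_t$-term appears when $s=s(t)$; that term happens to have a good sign, but you do not address it.)

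The second gap is that you discard the leading negative term in \eqref{219}. In the paper this term is \emph{retained} on the left-hand side after integration, giving
\[
\chi[t]+\frac{\delta_1}{\l^{\al-1}s}\int_0^t\chi'[\tau]\,d\tau \le \chi[0]+s^{-2\al}\int_0^t\xi'[\tau]\,d\tau+C_1\l^{-1}s^{1-(n+1)\al},
\]
and it is precisely this control of $\int_0^t\chi'[\tau]\,d\tau$ (not of $\chi'[\tau]$ pointwise) that drives the induction: dropping $\chi[t]\ge0$ and rescaling yields a bound on $\int_0^t\chi'$ in terms of $s\chi[0]$ plus $s^{-\g}\int_0^t\xi'$, and iterating this $n$ times produces the exponent $\beta$. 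Your alternative bootstrap --- apply the monotonicity to $\eta\in\cX$ to bound $\br{\A_s(\tau,\xi_k')}_\tau$ pointwise --- does not close: the monotonicity scheme bounds $\eta[\tau]=\br{\A_s(\tau,\eta)}_\tau$, not $\eta'[\tau]=\br{\A_s(\tau,\eta')}_\tau$, and the quantities $\xi_k'[\tau]$ you need are of the latter type. There is no mechanism here to pass from control of $\eta[\tau]$ to control of $\xi_k'[\tau]$.
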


\begin{proof} 
	
	Within this proof, we fix $s\ge\l t >0$ and a solution $u_t,\,t\ge0$ to \eqref{SE} satisfying \eqref{uCond}.
For ease of notation, for any function $f\in L^\infty$, we write
\begin{align}\label{2.2}
	f[t]:=  \br{ {\A_s(t,f)} }_t .
\end{align}
Thus with this notation the desired estimate \eqref{336} becomes
\begin{align}\label{336'}
	\chi[t]  \le	\chi[0]+  {s^{-\g}} \eta[0]+C_4s^{ -\beta}.
\end{align}

	1.  We start with integrating the differential inequality from \corref{corHeis} to a convenient integral estimate. 
By the Heisenberg equation \eqref{HE} with $A(t)\equiv \A_s(t,\chi)$ and the fundamental theorem of calculus, we find
\begin{align} \label{eq-basic}  
	\chi[t]=\chi[0]+\int_0^t \di_\tau\chi[\tau]\,d\tau .
\end{align}
Applying  estimate \ref{219} and estimates \eqref{PkrhoEst}, \eqref{QkrhoEst} to bound the integrand in \eqref{eq-basic}, we obtain, for some $C_1=C_1(n,\chi,M)>0$,
\begin{align}
	\label{225}
	\di_\tau\chi[\tau]\le s^{-\al}(-\bar v\al t^{\al-1} \chi'[\tau]+\br{\A_s(\tau,w)i[H_0,\phi]\A_s(\tau,w)}_\tau) +C_1\sum_{k=2}^{n+1} s^{-k\al}\xi_k'[\tau]+C_1s^{-(n+1)\al}. 
\end{align}
We now show that the leading term is bounded by a strictly negative multiple of $\chi'[\tau]$. 
First, using the choice $\chi'= w^2$ and   condition \eqref{uCond}, we have
\begin{align*}
	\label{}
	-\bar v\al \tau^{\al-1} \chi'[\tau]+\br{\A_s(\tau,w)i[H_0,\phi]\A_s(\tau,w)}_\tau=& \br{\A_s(\tau,w)\sbr{-\bar v\al t^{\al-1}+i[H_0,\phi]}\A_s(\tau,w)}_\tau\\
	=& -\delta_1 \tau^{\al-1}\br{\A_s(\tau,w)^2}=-\delta_1 \tau^{\al-1}\chi'[\tau].
\end{align*}
Next, for $s\ge \l t>0$, we have for all $\tau\le t$  that $\tau^{1-\al}\le \l^{\al-1}s^{1-\al}$,  and therefore
$$\frac{\tau^{\al-1}}{s^\al}=\frac{1}{s^\al \tau^{1-\al}}\ge \frac{1}{\l ^{\al-1}s}.$$
Combining the last two inequalities, we conclude that the leading term in \eqref{225} can be bounded as 
\begin{align}
	\label{leadEst}
	s^{-\al}(-\bar v\al t^{\al-1} \chi'[\tau]+\br{\A_s(\tau,w)i[H_0,\phi]\A_s(\tau,w)}_\tau)\le - \frac{\delta_1}{\l^{\al-1}s}\chi'[\tau].
\end{align}

For the sum in \eqref{225}, we choose a function $\xi\in\cY$ such that $\xi'\ge C_1\sup_{k}\xi_k'$ pointwise. (Such function is easy to choose, see Step 2 in the proof of \propref{propSymExp} for a construction.) Then it follows 
\begin{align}
	\label{sumEst}
	C_1\sum_{k=2}^{n+1} s^{-k\al}\xi_k'[\tau]\le s^{-2\al}\xi'[\tau]. 
\end{align}

Combining \eqref{leadEst} and \eqref{sumEst} in \eqref{225} yields
\begin{align}
	\label{225'}
	\di_\tau\chi[\tau]\le - \frac{\delta_1}{\l^{\al-1}s} \chi'[\tau]+ s^{-2\al} \xi'[\tau]+C_1s^{-(n+1)\al}.
\end{align}
Note importantly that the second term in the r.h.s.~ is of lower order if and only if $\al>1/2$.

Plugging \eqref{225'} back to \eqref{eq-basic},  transposing the $O(s^{-1})$ term to the left, and using $t\le \l^{-1}s$ to bound the remainder,  we arrive at the integral estimate
\begin{align}
	\label{intEst1}
	\chi[t]+  \frac{\delta_1}{\l^{\al-1}s}\int_0^t\chi'[\tau]\,d\tau \le \chi[0] +s^{-2\al} \int_0^t \xi'[\tau]\,d\tau +C_1\l^{-1}s^{1-(n+1)\al}.
\end{align}

2. Now we claim the following holds: There exist $\tilde \chi\in \cX$ (dropped for $n=1$) and $C_2>0$ depending only on $n,\,\chi,\,M,$   and $\bar \delta :=\delta_1 \min(\l^{1-\al}, \l^{2-\al})$
such that for
$$\g=2\al-1,$$
which is strictly positive for $\al>1/2$, we have
\begin{align}
	&\int_0^t \chi'[t]dr  \le  \bar \delta ^{-1}s\chi[0]
		+   C_2 s^{1-\g}\tilde \chi[0]+    C_2 s^{1-n\g}  .
	\label{propag-est31} 
\end{align}
Here the second term on the r.h.s.~is dropped for $n=1$. 
Furthermore,   for fixed $n,\chi,M$, we have
\begin{align}
	\label{C2Est}
	C_2=O(\bar \delta^{-n} ) ,\quad \bar\delta\to0. 
\end{align}

We prove \eqref{propag-est31} using \eqref{intEst1} by induction in $n$. 
First, we drop the first term in the r.h.s.~of \eqref{intEst1} (which is non-negative) and then multiplying both side by $s \delta_1^{-1}\l^{\al-1}$   to get
\begin{align}
	&\int_0^t \chi'[t]dr  \le  {s\delta_1^{-1}\l^{\al-1}s\chi[0]
		+  \delta_1^{-1}\l^{\al-1}s^{-\g}\int_0^t \xi'[\tau]\,d\tau+    C_1\delta_1^{-1}\l^{\al-2}s^{2-(n+1)\al} }.
	\label{intEst2} 
\end{align}
 
For the base case $n=1$, we note that $\xi$ satisfies the uniform bound $\xi'\le C_3$ for some $C_3=C_3(n,M,\chi)>0$,   since $\xi$ depends only on $C_1$ and $\xi_k$ (see \eqref{sumEst}) and 
 $\xi'\in C_c^\infty$ by definition \eqref{Xdef}). Thus the integrand on the r.h.s.~of \eqref{intEst2} can be bounded as $\int_0^t\xi'[\tau]\,d\tau \le C_3\l^{-1}s$ for $0< t\le \l^{-1}s$. Plugging this back to \eqref{intEst2} and setting $C_2=\bar \delta^{-1}(C_1+C_3)$
yields \eqref{propag-est31}--\eqref{C2Est} for $n=1$.

Next, assuming \eqref{propag-est31} holds for $n-1$ with $n\ge2$, we prove it for $n$. Indeed, by induction hypothesis, the integral on the r.h.s.~of \eqref{intEst2} is bounded as
\begin{align}
	\label{}
	\int_0^t \xi'[\tau]\,d\tau\le  C_{2,n-1} \del{s\xi[0]
		+  s^{1-\g}\tilde \xi[0]+    s^{1-(n-1)\g} }.
\end{align}
Here $C_{2,n-1}=O(\bar \delta ^{n-1})$ and the $\tilde\xi$-term is dropped for $n=2$. 
Plugging this back to \eqref{intEst2} yields
\begin{align}
	&\int_0^t \chi'[t]dr  \le     {\delta_1^{-1}\l^{\al-1}s\chi[0]
		+  \delta_1^{-1}\l^{\al-1}C_{2,n-1}  \del{s^{1-\g}\xi[0]
			+  s^{1-2\g}\tilde \xi[0]+    s^{1-n\g} }+    C_1\delta_1^{-1}\l^{\al-2} s^{2-\al(n+1)} }.
	\label{intEst3} 
\end{align}
{By assumption $\al>1/2$, we have $\g>0$ and so the third term on the r.h.s.~is of lower order compared to the second one. } Moreover, for $0<\al \le 1$ we have $1-n\g \ge 2-\al(n+1)$ for all $n\ge1$. Thus the last term is of lower order compared to the penultimate one. Using these facts and choosing $$C_{2,n}=\bar \delta^{-1} (C_{2,n-1}+C_1),$$ and  any majorant $\tilde \chi\in \cX$ such that %
$$
	\tilde \chi (\mu)\ge \xi (\mu),\,\tilde \xi(\mu),\quad \mu\in\Rb,
$$we conclude the desired estimates \eqref{propag-est31}--\eqref{C2Est} for $n$ from \eqref{intEst3}. This completes the induction and \eqref{propag-est31} is proved.

	3. Finally, we use \eqref{propag-est31} to derive the desire estimate \eqref{336}.

	Dropping the second term in the l.h.s. of \eqref{intEst1}, which is non-negative since  $\delta_1>0$ and $ \chi'[\tau]\ge0$ for all $r$, we obtain
	\begin{align} \label{128}
				&\chi[t]   
		\le   \chi[0]+  s^{-2\al}\int_0^t  \xi'[\tau]\,d\tau +  C_1\l^{-1}{  s^{1-(n+1)\al}}.
	\end{align}
	We apply estimate \eqref{propag-est31} to the  second term in the r.h.s.~of \eqref{128}, yielding
	\begin{align}\label{e3442}
		\chi[t]\le  \chi[0]+ \bar \delta ^{-1}s^{-\g}\xi[0]
		+   C_2 s^{-2\g}\tilde \xi[0]+    C_2 s^{-(n+1)\g}+ C_1\l^{-1}{  s^{1-(n+1)\al}}.
	\end{align}
	Here the third term is dropped for $n=1$. 
	Take now %
	\begin{align}
		\label{C4def}
		C_4:=2C_2+\l^{-1}C_1,
	\end{align} and $\eta\in\cX$ satisfying  $\eta (\mu)\ge \bar\delta^{-1}\xi (\mu),\,C_2\tilde \xi(\mu)$  pointwise (drop $\tilde \xi$ for $n=1$). Then   we conclude the desired estimate  \eqref{336'}  from \eqref{e3442}. This completes the proof of \propref{propME}.
\end{proof}

\subsection{Completing the proof of \thmref{thm1}}
Let $u_t$ be a solution to \eqref{SE} satisfying \eqref{uCond'} for some $W$.  
Since $\bar v = v-\frac{\delta}{2\al},$  we have $v>\bar v$ and $\bar v \al-\frac\delta2 =v\al -\delta$. Moreover, owning to \propref{propGeo}, Part (2), there exists $w\in\cY$ s.th.~$\A(t,w)\equiv \A_{s(t)}(t,w)  =W_t(\phi)$ for all $t>0$, $s(t)=\del{\frac {v-\bar v}{\eps}}^{1/\al} t$, and
$\phi=\dist_{B_R}$. 
Using these facts, together with assumption \eqref{uCond'} and the Cauchy-Schwarz inequality, we find for all $t>0$ that
\begin{align*}
	\label{}
	\br{\A(t,w)i[H_0,\phi]\A(t,w)}_t=& \br{W_t(\phi)i[H_0,\phi]W_t(\phi)}_t\\
	\le&\norm{W_t(\phi)u_t}\norm{i[H_0,\phi]W_t(\phi)u_t}\\
	\le&(  v\al-\delta)t^{\al-1}\norm{W_t(\phi)u_t}^2\\
	=&(\bar v \al - \frac\delta2)t^{\al-1}\br{ {\A(t,w)^2 } }_t.
\end{align*}
From here we conclude that 
condition \eqref{uCond} is satisfied with $\delta_1 =\frac\delta2$, $\l=\del{\frac {v-\bar v}{\eps}}^{1/\al}$, and the choice $s=s(t)$.

Take now $\chi=\int^xw^2\in\cX$. Applying \propref{propME} to $\chi$, we get a function $\eta\in\cX$ and some constant $C_4>0$ s.th.~\eqref{336} holds. Furthermore, since $s$ is given by \eqref{Adef} and $\phi=\dist_{B_R}$, we have by \propref{propGeo}, Part (1) that 
\begin{align*}
				\norm{\chi}_{L^\infty}^{-1} \A(0,\chi)  \le&  \1_{\Set{\abs{x}>R}},\\
 \1_{\Set{\abs{x}>R+vt^\al}} \le& \norm{\xi}_{L^\infty}^{-1}\A(t,\xi)\qquad(\xi=\chi,\eta,\ t>0).
\end{align*}
Inserting these bounds into \eqref{336}, multiplying both sides by $\norm{\chi}_{L^\infty}$, and recalling the choice that $s=\l t$, we arrive at, for $C_4>0$ as in \eqref{C4def},
\begin{align}
	\label{pen}
\br{ \1_{\Set{\abs{x}>R+vt^\al}}}_t   \le \br{\1_{\Set{\abs{x}>R}}}_0+  \frac{\norm{\chi}_{L^\infty}}{\norm{\eta}_{L^\infty}}\l^{-\g}{t^{-\g}} \br{\1_{\Set{\abs{x}>R}}}_0 +C_4\norm{\chi}_{L^\infty}\l^{-\beta}t^{ -\beta}	.
\end{align}
Lastly, choosing  
\begin{align}
	\label{Cdef}
	C= \frac{\norm{\chi}_{L^\infty}}{\norm{\eta}_{L^\infty}}\l^{-\g}+C_4\norm\chi\l^{-\beta},
\end{align}   and noting that 
$\br{ \1_{\Set{\abs{x}>N}}}_t=P(N,t)$, we conclude the desired estimate \eqref{propGen} from \eqref{pen}. 
 This completes the proof of \thmref{thm1}.

\section{Proof of \thmref{thm3}}\label{secPfBall}

%

To prove \eqref{MainEst}, we first derive  an $\ell^2$-ballistic transport bound that controls the dynamical spreading of wavepackets  at various scales.
Then we use the $\ell^2$-bound together with a dyadic multiscale argument to   conclude  uniform bound for $P_r(N,t)$.


Recall the outside probability $P(N,t)$ is defined in \eqref{Pdef}. 
Using \thmref{thm2}, we derive the following:
\begin{proposition}
	Assume \eqref{propEst'} holds with some $n\ge1$. Then 	for any  $A-1\ge B>0$ and $v>\kappa$, the solution   $u(x,t)$ to \eqref{SE}  satisfies, for all $t\ge1$,
	\begin{equation}\label{propEst}
		\sum_{\abs{x}\ge Avt}	\abs{u(x,t)}^2
		\le C _{M,v,A-B}\del{\del{\sum_{\abs{x}\ge Bvt }	\abs{u(x,0)}^2} + {t}^{-n} }.
	\end{equation}  
\end{proposition}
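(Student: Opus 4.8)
The plan is to obtain \eqref{propEst} directly from the ballistic bound \eqref{propEst'} of \thmref{thm2} by instantiating the free ball--radius parameter $R$ in \eqref{propEst'} at a suitable $t$-dependent value. The crucial feature of \thmref{thm2} is that the constant $C=C(n,M,\delta)$ there depends neither on $R$ nor on $t$; hence for each fixed $t$ we are free to apply \eqref{propEst'} with $R$ equal to any nonnegative number.

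Concretely, fix $t\ge 1$. Since $A\ge B+1>1$, the number $R:=(A-1)vt$ is nonnegative and satisfies $vt+R=Avt$. Applying \eqref{propEst'} with this $R$ and with the given $v>\kappa$, $\delta=v-\kappa$, gives
\begin{equation*}
	\sum_{\abs{x}>Avt}\abs{u(x,t)}^2=P(Avt,t)\le (1+Ct^{-1})\,P((A-1)vt,0)+Ct^{-n}.
\end{equation*}
Now I invoke the hypothesis $A-1\ge B$: it yields $(A-1)vt\ge Bvt$, so $\{x\in\Zb^d:\abs{x}>(A-1)vt\}\subseteq\{x\in\Zb^d:\abs{x}\ge Bvt\}$ and hence $P((A-1)vt,0)\le\sum_{\abs{x}\ge Bvt}\abs{u(x,0)}^2$. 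Using also $1+Ct^{-1}\le 1+C$ for $t\ge 1$, we arrive at
\begin{equation*}
	\sum_{\abs{x}>Avt}\abs{u(x,t)}^2\le(1+C)\Big(\sum_{\abs{x}\ge Bvt}\abs{u(x,0)}^2+t^{-n}\Big),
\end{equation*}
which is \eqref{propEst} with $C_{M,v,A-B}:=1+C$ --- up to the harmless mismatch between the sum over $\{\abs{x}>Avt\}$ appearing here (consistent with the definition \eqref{Pdef} of $P$) and the sum over $\{\abs{x}\ge Avt\}$ literally written in \eqref{propEst}, which is absorbed by a cosmetic change of radius and plays no role in the dyadic argument of \secref{secPfBall}.

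I do not expect a genuine obstacle: the proposition is \thmref{thm2} read along the moving light cone $\{\abs{x}=(A-1)vt\}$ in place of a fixed ball. The only two points to check are that \eqref{propEst'} is genuinely uniform in $R$ (which is built into the statement of \thmref{thm2}) and that the velocity condition $v>\kappa$ holds --- which it does automatically, since the light--cone slope is kept equal to $v$ and only the initial ball is dilated. If one instead prefers to reproduce the stated dependence $C_{M,v,A-B}$ verbatim, one may equivalently apply \eqref{propEst'} with the dilated velocity $(A-B)v$, which exceeds $\kappa$ precisely because $A-B\ge 1$ and $v>\kappa$, together with the initial radius $Bvt$; since $(A-B)vt+Bvt=Avt$, the same computation then yields the claim with $\delta=(A-B)v-\kappa$.
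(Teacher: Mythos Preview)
Your proposal is correct and is essentially the paper's own argument: instantiate the free radius $R$ in \eqref{propEst'} at a $t$-dependent value so that $v't+R=Avt$ and $R\ge Bvt$. The paper chooses precisely your second variant, applying \eqref{propEst'} with $R=Bvt$ and dilated velocity $v'=(A-B)v>\kappa$; your first variant ($R=(A-1)vt$, velocity $v$) works just as well and in fact produces a constant independent of $A-B$.
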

\begin{proof}
	
	Since \eqref{propEst'} holds uniformly in $R>0$ and $v>\kappa$, applying it with  $R=Bvt$ and $v'=(A-B)v>\kappa$  yields the desired estimate \eqref{propEst}.
\end{proof}
Next, we lift the $\ell^2$-ballistic upper bound \eqref{propEst} to higher moments via a multiscale argument. 
Fix $v$, $r$ and write, for $P_r$ from \eqref{HpDef},
	\begin{align}
		\label{112}
		P_r(2vt,t)= \sum_{k=1}^\infty Q_k(t),\quad Q_k(t):=\sum _{2^kvt \le \abs{x}\le 2^{k+1}t}\abs{x}^r \abs{u(x,t)}^2. 
	\end{align} 
	For each $k=1,2,\ldots$, we have  
	\begin{align}
		\label{18}
		Q_k(t)\le&C_{d,r,v}(2^{k+1}t)^{dr}\sum _{\abs{x}\ge 2^kvt}  \abs{u(x,t)}^2.
	\end{align}
	Since $v>\kappa$, we apply \eqref{propEst} with $A=2^k$, $B=2^{k-1}$,  to bound the sum in \eqref{18}   as
	\begin{align}
		\label{19'}
		\sum _{\abs{x}\ge 2^kvt}  \abs{u(x,t)}^2 \le & C_{M,v} \del{\del{\sum_{\abs{x}\ge 2^{k-1}vt }	\abs{u(x,0)}^2}^{1/2}+ {((2^{k}-2^{k-1})t)}^{-n} }^2.
	\end{align}
	
To bound the first term on the r.h.s., we use condition \eqref{q0Suff}. Indeed, for any $\rho>0$, we have 	$P_{r_0}(0,0)\ge \sum_{\abs{x}\ge \rho} \abs{x}^{r_0}\abs{u_0(x)}^2\ge\rho ^{dr_0} \sum_{\abs{x}\ge \rho}  \abs{u_0(x)}^2$. Transposing this and using \eqref{q0Suff} gives
	\begin{align}
		\label{q0Cond}
		\sum_{\abs{x}\ge \rho} \abs{u(x,0)}^2\le C_{u_0}\rho^{-dr_0},\quad \rho>0.
	\end{align}
	Using \eqref{q0Cond} with $\rho=2^{k-1}vt$ to bound the first term in the r.h.s.~of \eqref{19'}, we find
	\begin{align}
		\label{19}
		\sum _{\abs{x}\ge 2^kvt}  \abs{u(x,t)}^2  \le & C_{M,v,u_0}\sbr{ (2^{k-1}t)^{-dr_0/2}+ (  2^{k-1} t)^{-n}}^2 .
	\end{align}
	Combining \eqref{18} with \eqref{19}, we conclude that 
	\begin{align}
		\label{115}
		Q_k(t)\le& C_{d,r,v, M,u_0} (2^{k+1}t)^{dr}\sbr{ (2^{k-1}t)^{-dr_0/2}+ (  2^{k-1} t)^{-n}}^2\notag\\
		\le&  C_{d,r,v, M,u_0,\g}(2^{k+1}t)^{-\g},\qquad \g =\min (dr_0,2n)-dr.
	\end{align}
	Owning to \eqref{115}, for any  $ t\ge 1$, we have %
	\begin{align}
		\label{QkEst}
		Q_k(t)\le C_{d,r,v, M,u_0,\g}2^{-\g k}.
	\end{align} Plugging \eqref{QkEst} into \eqref{112} yields
	\begin{align}
		\label{EestFin'}
			P_r(2vt,t)\le& C_{d,r,v, M,u_0,\g}\sum_{k=1}^\infty 2^{-\g k} .
	\end{align}
		By the assumptions $r_0>r$ and $n>dr_0/2$, we have   $\g>0$ in \eqref{EestFin'}. Thus the series in \eqref{EestFin'} converges and so \eqref{MainEst} follows.

\section{Proof of \thmref{thmNLS}}\label{secPfNLS}
Fix any $\ell^2$-solution, say $q^{(0)}$, to \eqref{NLS}, and consider the frozen-coefficient equation
\begin{align}
	\label{NLSf}
	i\di_t u =  (H_0+ V(t))u,\quad   V(t)= \bar V(t)+\abs{N(t,q^{(0)}t)}.
\end{align}
By the assumptions on $\bar V(t)$ and $N(t,q)$, the potential $V(t)$ is uniformly bounded for all $t$. Thus \thmref{thm1} applies to the equation \eqref{NLSf}. Since $q^{(0)}$ satisfies \eqref{NLS}, it is also a solution to  \eqref{NLSf}. Thus the conclusion of Thms.~\ref{thm2}--\ref{thm3} hold for $q^{(0)}$. 

Now observe that for any multiplication operator by $\phi:\Zb^d\to \Rb$, we have the identity
\begin{align}
	\label{cLrel}
	[	  V(t),\phi ]\equiv 0.
\end{align}
Since the proofs of Thms.~\ref{thm2}--\ref{thm3} are based on commutator estimates with functions of $\phi$, the conclusions are independent of $V(t)$ in \eqref{NLSf}  owning to identity \eqref{cLrel}.
Indeed, the commutator estimate \eqref{symExpMain} remains valid as-is if $H_0$ is changed to $H_0+V(t)$, whereas in all subsequent results, the dynamics enter only through  \eqref{symExpMain}. 
Therefore the estimates in Thms.~\ref{thm2}--\ref{thm3}  are uniform in  $q^{(0)}$. Letting $q^{(0)}$ vary then yields \thmref{thmNLS}.

\begin{remark}
	\label{remNLSprop}
	Note that the same argument cannot be extended to the continuum. Indeed, consider \eqref{NLS} with  $H_0=-\Lap$, $\bar V=0$, and $N=\abs{q}^{p-1},p>1$ on $\Rb^d.$ Then \thmref{thm2} does not hold without any energy cutoff adapted to $H_0$, since the momentum operator $i[H_0,\abs{x}]$ is unbounded. However,  commutator estimates on $H_0+V(t)$ with such a cutoff, say $g(H_0)$ with $g\in C_c^\infty(\Rb)$, would be dependent on the frozen solution $q^{(0)}$, since $[N(q^{(0)}),g(H_0)]\ne0$. Thus in the continuous case, the equality \eqref{cLrel} cannot hold in general. Indeed, as it turns out, propagation estimate for nonlinear Schr\"odinger  equation in the continuum requires a lot more delicate treatment than describe above; see e.g.~\cites{AFPS,LS,SSf,SW,SWa}. 
\end{remark}

\section*{Acknowledgments}

The Author is supported by National Key R \& D Program of China Grant 2022YFA100740, China Postdoctoral Science Foundation Grant 2024T170453, and the Shuimu Scholar program of Tsinghua University. The Author thanks M.~Lemm and Wencai Liu for helpful remarks. He thanks M.~Lemm, S.~Rademacher, C.~Rubiliani, and I.~M.~Sigal  for fruitful collaborations leading to this work.

\section*{Declarations}
\begin{itemize}
	\item Conflict of interest: The Author has no conflicts of interest to declare that are relevant to the content of this article.
	\item Data availability: Data sharing is not applicable to this article as no datasets were generated or analysed during the current study.
\end{itemize}

\appendix

\section{Commutator estimates}\label{sec:A}

In this appendix, we prove some technical lemmas for deriving commutator expansion \propref{propSymExp}. 
Most of the results can be found in \cite{Zha}  but are adapted here for the reader's convenience.

\begin{lemma}
	\label{lemA0}
	Let $H_0$ satisfy \eqref{Lcond} with $n\ge1$ and let $\phi(x)$ be a Lipschitz function satisfying $\abs{\phi(x)-\phi(y)}\le L\abs{x-y}$ for some $L>0$ and all $x,\,y\in\Zb^d$. Then 
	\begin{align}
		\label{uEst}
		\norm{{\ad{k}{\phi}{H_0}} }\le  {M_k L^{  k}} ,\quad   M_k:=\sup_{x\in\Zb^d}\sum_{y\in\Zb^d}\abs{H_0(x,y)} \abs{x-y}^k.
	\end{align}
\end{lemma}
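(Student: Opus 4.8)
The plan is to reduce the bound to an explicit computation of the integral kernel of the iterated commutator, followed by Schur's test. Write $H_0(x,y):=\inn{\delta_x}{H_0\delta_y}$ and regard $\phi$ as the (generally unbounded) multiplication operator by the given Lipschitz function. The first step is to establish, by induction on $k$, the kernel identity
$$\inn{\delta_x}{\ad{k}{\phi}{H_0}\,\delta_y}=H_0(x,y)\,\big(\phi(y)-\phi(x)\big)^{k},\qquad x,y\in\Zb^d.$$
The case $k=0$ is the definition of the kernel, and the inductive step uses only that $\phi$ acts diagonally, so that for any matrix $A$ one has $\inn{\delta_x}{[A,\phi]\delta_y}=A(x,y)(\phi(y)-\phi(x))$. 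Because $\phi$ is unbounded, at each stage I would read $[\ad{k-1}{\phi}{H_0},\phi]$ as the sesquilinear form $u,v\mapsto\inn{u}{\ad{k-1}{\phi}{H_0}\,\phi v}-\inn{\phi u}{\ad{k-1}{\phi}{H_0}\,v}$ on $\cD(\phi)$, which is well defined once $\ad{k-1}{\phi}{H_0}$ is known to be bounded, and carry the induction on the core of finitely supported sequences, where every sum in sight is finite.

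Given the kernel identity, the Lipschitz hypothesis $\abs{\phi(x)-\phi(y)}\le L\abs{x-y}$ immediately yields the pointwise majorisation
$$\big|\inn{\delta_x}{\ad{k}{\phi}{H_0}\,\delta_y}\big|\le L^{k}\,\abs{H_0(x,y)}\,\abs{x-y}^{k}=:K_k(x,y).$$
The diagonal term of $K_k$ vanishes and on $\Zb^d$ one has $\abs{x-y}\ge1$ for $x\ne y$, so for $1\le k\le n+1$ the row sums satisfy $\sum_{y}K_k(x,y)\le L^{k}\sum_{y}\abs{H_0(x,y)}\abs{x-y}^{n+1}\le L^{k}M<\infty$ by \eqref{Lcond}; in particular $M_k<\infty$. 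Since $H_0$ is self-adjoint, $\abs{H_0(x,y)}=\abs{H_0(y,x)}$, so $K_k$ is symmetric and its column sums equal its row sums, the common value being $L^{k}M_k$. Schur's test then shows that $\ad{k}{\phi}{H_0}$, initially a form on finitely supported sequences, extends to a bounded operator on $\hf$ with $\norm{\ad{k}{\phi}{H_0}}\le\big(\sup_x\sum_y K_k(x,y)\big)^{1/2}\big(\sup_y\sum_x K_k(x,y)\big)^{1/2}=L^{k}M_k$, which is exactly \eqref{uEst}, and this is the extension referred to in the statement.

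The analytic content here is essentially one line (Lipschitz bound plus Schur), so the main obstacle is bookkeeping rather than estimation: one must consistently interpret every commutator with the unbounded operator $\phi$ as a sesquilinear form on a common dense core — symmetrising $\inn{u}{[A,\phi]v}=\inn{u}{A\phi v}-\inn{\phi u}{Av}$ so as not to require $Av\in\cD(\phi)$ — run the kernel induction there, and only at the end invoke Schur's test to produce and identify the bounded operator appearing in the statement. No input beyond \eqref{Lcond}, self-adjointness of $H_0$, and the Lipschitz property is needed.
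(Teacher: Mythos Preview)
Your proposal is correct and follows essentially the same route as the paper: establish the kernel identity $\ad{k}{\phi}{H_0}(x,y)=H_0(x,y)(\phi(y)-\phi(x))^{k}$ by induction, apply the Lipschitz bound pointwise, and conclude by Schur's test. You add a bit more care than the paper does about interpreting the commutators as forms on finitely supported sequences and about why $M_k<\infty$ for $1\le k\le n+1$, but the underlying argument is identical.
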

\begin{proof}
For any multiplication operators by $\phi:\Zb^d\to \Cb$ and all $k=1,\,2,\ldots$, a straightforward induction shows for  any $u\in\hf$,
\begin{equation}\label{A.2}
	\sbr{\ad{k}{\phi}{H_0}}u(x)=\sum_{y\in\supp u}(\phi(y)-\phi(x))^kH_0(x,y)u(y) .
\end{equation}
By the Lipschitz property of $\phi(x)$, we have
$$\abs{\phi(x)-\phi(y)}^k\le    {L^k}  \abs{x-y}^k.$$
Plugging this back to \eqref{A.2}, we find  
\begin{align}
	\label{A7}
	\abs{\sbr{\ad{k}{\phi}{H_0}}u(x)}\le  L^k \sum _{y\in \Zb^d} \abs{x-y}^k\abs{H_0(x,y)}\abs{u(y)}.
\end{align}
Using Schur's rest, we conclude \eqref{uEst} from \eqref{A7}.
\end{proof}

\begin{corollary}\label{corA.2}
	Let $H_0$ satisfy \eqref{Lcond} with $n\ge1$. Then for every $X\subset \Rb^d$, the distance function $ \dist_X(x)$ satisfies
	$$
	\norm{\ad{k}{\dist_X}{H}}\le M\qquad (1\le k\le n+1).
	$$
\end{corollary}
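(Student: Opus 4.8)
The plan is to deduce \corref{corA.2} directly from \lemref{lemA0}, by checking that $\phi=\dist_X$ satisfies the Lipschitz hypothesis with constant $L=1$ and then bounding the moments $M_k$ appearing there by $M=M_{n+1}$.

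First I would record the elementary fact that $\dist_X$ is $1$-Lipschitz: for any $x,y\in\Zb^d$ and any $z\in X$ the triangle inequality gives $\dist_X(x)\le\abs{x-z}\le\abs{x-y}+\abs{y-z}$, and taking the infimum over $z\in X$ yields $\dist_X(x)\le\abs{x-y}+\dist_X(y)$; symmetrizing gives $\abs{\dist_X(x)-\dist_X(y)}\le\abs{x-y}$. Hence \lemref{lemA0} applies with $L=1$ and produces $\norm{\ad{k}{\dist_X}{H_0}}\le M_k$ for $1\le k\le n+1$, where $M_k:=\sup_{x\in\Zb^d}\sum_{y\in\Zb^d}\abs{H_0(x,y)}\abs{x-y}^k$.

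Next I would compare $M_k$ with $M$. Since $\abs{x-y}\ge1$ whenever $x\ne y$ in the lattice $\Zb^d$ (and the $x=y$ term contributes nothing to either sum), monotonicity of $r\mapsto r^k$ on $[1,\infty)$ gives $\abs{x-y}^k\le\abs{x-y}^{n+1}$ for all $1\le k\le n+1$, so $M_k\le M_{n+1}=M$ by \eqref{Lcond}. Combining the two displays yields $\norm{\ad{k}{\dist_X}{H_0}}\le M$ for $1\le k\le n+1$, which is the assertion; and since each $V(t)$ acts by multiplication, $\ad{k}{\dist_X}{H}=\ad{k}{\dist_X}{H_0}$ for $k\ge1$, so the statement is unchanged if $H$ is written in place of $H_0$.

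I do not anticipate any serious obstacle here. The only point needing (minor) care is the reduction $M_k\le M$: it relies on the lattice being $1$-separated, so that lower powers are dominated by the top power $n+1$. This step would fail verbatim on $\Rb^d$, but on $\Zb^d$ it is immediate, and it is the reason the single moment condition \eqref{Lcond} controls all the commutators $\ad{k}{\dist_X}{H_0}$, $1\le k\le n+1$, simultaneously.
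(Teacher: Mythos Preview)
Your proposal is correct and follows essentially the same approach as the paper's own proof, which is a one-line application of \lemref{lemA0} noting that $\dist_X$ is $1$-Lipschitz and that $M_k\le M$ for $1\le k\le n+1$. You simply supply the details the paper leaves implicit (the triangle-inequality argument for $1$-Lipschitzness and the lattice observation $\abs{x-y}\ge1$), and you additionally clarify the $H$ versus $H_0$ ambiguity in the statement.
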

\begin{proof}
	All $d_X$ satisfies the Lipschitz bound with $L=1$ and $M\ge M_k$ for $k=1,\ldots,n+1$. 
\end{proof}

The next lemma is a refinement of previous commutator expansion results in \cite{BFL+} and \cite{Zha}; see also \cite{HSS}*{Lem.~2.1}. 
Let  $\phi$ be a densely defined self-adjoint operator defined on $\hf\equiv \ell^2(\Zb^d),\,d\ge1$.
Fix $t\in\Rb$, $\l>0$   and write
$$	\A(f):=f\del{\frac{\phi-\bar vt^\al}{\si}},\qquad   f\in L^\infty(\Rb).$$
 We prove the following:
\begin{lemma}\label{lemA.2}
Let $H_0$ be a bounded self-adjoint operator on $\hf$. Assume the operator $\phi$ satisfies
\begin{align}
	\label{commCond}
	\norm{B_k}<\infty \quad \text{ where }\ B_k:=i\ad{k}{\phi}{H_0},\ k=1,\ldots,n+1.
\end{align}
Then, for any $\chi=\int^x w^2\in\cX,\,w\in\cY$ (see \eqref{Ydef}--\eqref{Xdef}),  there exist $\theta_k\in\cY$, $k=2,\ldots,n$ (dropped if $n=1$),
	such that the following operator inequality holds on $\hf$:
\begin{align} 
	&i[H_0,\A(\chi)]\label{535} \\\le&\notag   \si^{-1}\A(w)B_1\A(w)\\&+ \sum_{k=2}^{n}{\frac{\si^{-k}}{2}}\sbr{\frac{1}{(k-1)!}\del{  \A(w)^2+ \A(w^{(k-1)})B_{k}^{*}B_{k}\A(w^{(k-1)})}+\frac{1}{k!}\del{(\A(\chi^{(k)}))^2+\A(\theta_k)B_k^*B_k\A(\theta_k)}}
	\notag\\
	&+ \frac{\si^{-(n+1)}}{2}\del{1+\A(w)R_n(w)R_n(w)^*\A(w)+\sum_{k=2}^n(1+\A(\chi^{(k)})R_{n-k+1}(\theta_k)R_{n-k+1}(\theta_k)^*\A(\chi^{(k)})}\notag\\
	&+\frac{\si^{-(n+1)}}{2} \del{R_{n+1}(\chi)+ R_{n+1}(\chi)^*}.\notag
\end{align}
Here the $k$-sums are dropped for $n=1$, and $R_{m}$ is a bounded operator given by
\begin{align}
	\label{RmDef}
	R_m(\chi)=\int d\tilde \chi(z)(z-\phi)^{-m}B_m(z-\phi)^{-1},
\end{align}
where $d\tilde \chi(z)$ is a measure on $\Cb$ satisfying $d\tilde \chi (z)\equiv 0$ for $\Im z=0$ and 
\begin{align}
	\label{dchiEst}
	\int\abs{d\tilde\chi(z)}\abs{\Im(z)}^{-(p+1)}\le C_{p,\chi},\quad p\ge 1.
\end{align}

\end{lemma}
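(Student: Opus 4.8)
The target is the commutator expansion \eqref{535} for $i[H_0,\A(\chi)]$ with $\chi=\int^x w^2\in\cX$. The natural starting point is the Helffer--Sj\"ostrand (almost-analytic) functional calculus: writing $\A(\chi)=\chi(\psi)$ with $\psi:=(\phi-\bar vt^\al)/\si$, one has
\begin{align}
	\label{hs}
	\A(\chi)=\frac{1}{\pi}\int_{\Cb} \bar\di\tilde\chi(z)\,(z-\psi)^{-1}\,dA(z),
\end{align}
where $\tilde\chi$ is an almost-analytic extension of $\chi$ with $\bar\di\tilde\chi$ supported near $\supp\chi$ and vanishing to high order on $\Rb$; since $\chi$ is smooth with $\chi'=w^2\in C_c^\infty$, one gets the decay estimate \eqref{dchiEst}. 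Commuting $H_0$ through the resolvent and using the resolvent identity $[H_0,(z-\psi)^{-1}]=(z-\psi)^{-1}[H_0,\psi](z-\psi)^{-1}$ and then iterating with $[H_0,\psi]=\si^{-1}B_1/i$ (up to the $i$ bookkeeping) produces, after $n$ iterations,
\begin{align}
	\label{iter}
	i[H_0,\A(\chi)]=\sum_{k=1}^{n}\frac{1}{\pi}\int \bar\di\tilde\chi(z)(z-\psi)^{-1}\big(\si^{-k}\widetilde{B_k}\big)(z-\psi)^{-1}\,dA(z)+\text{Rem}_{n+1},
\end{align}
where $\widetilde{B_k}$ is $B_k$ conjugated appropriately and $\text{Rem}_{n+1}=R_{n+1}(\chi)$ as in \eqref{RmDef}, with $\|R_{n+1}(\chi)\|\le C_{n+1,\chi}M$ following from \eqref{dchiEst} and \eqref{BkrhoEst}-type bounds (this gives \eqref{RemEst} and hence \eqref{QkrhoEst}).

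**Turning terms into ASTLOs.** The key trick (following \cite{BFL+},\cite{Zha}) is that each contour-integral term in \eqref{iter} should be rewritten, not as an operator-valued integral, but as a \emph{symmetrized bilinear form} in functions of $\psi$. Using the algebraic identity for derivatives of $\chi$ — namely $\chi^{(k)}=(w^2)^{(k-1)}$ expanded via Leibniz as a sum of products $w^{(i)}w^{(j)}$ with $i+j=k-1$, so that $w^{(k-1)}$ and $\chi^{(k)}$ and the auxiliary $\theta_k$ (a plateau function dominating all the $w^{(i)}$) appear — one converts $\frac1\pi\int\bar\di\tilde\chi(z)(z-\psi)^{-1}B_k(z-\psi)^{-1}dA(z)$, which is essentially $\frac{\si^{-k}}{k!}$ times something like $B_k$ sandwiched against $\chi^{(k)}(\psi)$, into the quadratic-form bound $\frac12(\A(u)B_k^*B_k\A(u)+\A(v)^2)$ by the elementary Cauchy--Schwarz-type inequality $X^*Y+Y^*X\le X^*X+Y^*Y$ applied with $X=B_k\A(u)$, $Y=\A(v)$ for appropriate $u,v\in\cY$. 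The leading $k=1$ term is kept as-is: $\si^{-1}\A(w)B_1\A(w)$, exploiting that $\chi'=w^2$ so the resolvent sandwich factors exactly. Collecting the orders $\si^{-1},\si^{-2},\dots,\si^{-n}$ from the explicit terms and $\si^{-(n+1)}$ from $\text{Rem}_{n+1}$ plus the cross terms produced by the last iteration gives precisely the four lines of \eqref{535}.

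**The main obstacle.** The genuinely delicate point is the \emph{symmetrization and ordering bookkeeping}: when one commutes $H_0$ past resolvents the natural output is a non-symmetric operator $\A(w^{(k-1)})B_k\A(\theta_k)$ (or similar mixed products), and one must (i) symmetrize it into a manifestly self-adjoint object, (ii) absorb the lower-order "derivative of resolvent" corrections that arise because $\psi$ does not commute with $B_k$, and (iii) verify that the auxiliary cutoffs $\theta_k\in\cY$ can be chosen to dominate \emph{all} the intermediate functions $w^{(i)}$, $\chi^{(j)}$ simultaneously while still being supported in $(0,\eps)$ — this is the construction referenced in Step 2 of the proof of \propref{propSymExp}. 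One also needs the quantitative control $\|B_k\|\le M$, which is available via \corref{corA.2}, and the decay bound \eqref{dchiEst} to make all the $z$-integrals converge with the stated constants $C_{p,\chi}$ depending only on $\chi$ (through finitely many derivatives). Once the symmetrization is handled, the remaining steps — power-counting in $\si$, naming the operators $P_k(\chi)$ and $Q_{n+1}(\chi)$, and reading off \eqref{RmDef}–\eqref{dchiEst} — are routine. I would therefore devote most of the write-up to carefully stating the symmetrized identity at each iteration step and deferring the elementary resolvent manipulations to a remark.
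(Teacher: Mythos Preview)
Your overall framework is right—one-sided commutator expansion via Helffer--Sj\"ostrand, then symmetrize, then Cauchy--Schwarz ($X^*Y+Y^*X\le X^*X+Y^*Y$)—and this matches the paper. But the mechanism you propose for producing the specific terms in \eqref{535} is not the one that works, and I do not see how your Leibniz route gets there.

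The two structural features of \eqref{535} you need to explain are: (i) the $\tfrac{1}{(k-1)!}$ terms carrying $\A(w^{(k-1)})B_k^*B_k\A(w^{(k-1)})$, and (ii) the $\tfrac{1}{k!}$ terms carrying $\A(\theta_k)B_k^*B_k\A(\theta_k)$. These come from \emph{two separate sources}, not from a single Leibniz expansion of $\chi^{(k)}=(w^2)^{(k-1)}$. The paper first writes the one-sided expansion $i[H_0,\A(\chi)]=\sum_{k=1}^n\tfrac{\si^{-k}}{k!}\A(\chi^{(k)})B_k+\si^{-(n+1)}R_{n+1}$, symmetrizes, and then treats the $k=1$ piece and the $k\ge2$ pieces by different tricks. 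For $k=1$: since $\chi'=w^2$ one has $\A(\chi')B_1+B_1\A(\chi')=2\A(w)B_1\A(w)+\A(w)[\A(w),B_1]+[B_1,\A(w)]\A(w)$, and then one \emph{re-applies the commutator expansion} to $[\A(w),B_1]$ itself, producing $\sum_l \tfrac{\si^{-l}}{l!}\A(w^{(l)})B_{1+l}$; after Cauchy--Schwarz these become the $\tfrac{1}{(k-1)!}$ terms. For $k\ge2$: one chooses $\theta_k\in\cY$ with $\theta_k\equiv1$ on $\supp\chi^{(k)}$ (not merely dominating the $w^{(i)}$, as you wrote), inserts it as $\A(\chi^{(k)})B_k=\A(\chi^{(k)})\A(\theta_k)B_k$, and commutes $\A(\theta_k)$ past $B_k$. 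The crucial point is that $\supp\theta_k^{(l)}\cap\supp\chi^{(k)}=\emptyset$ for all $l\ge1$, so every intermediate term $\A(\chi^{(k)})\A(\theta_k^{(l)})B_{k+l}$ \emph{vanishes identically}, leaving only $\A(\chi^{(k)})B_k\A(\theta_k)$ plus a remainder at order $\si^{-(n+1)}$; Cauchy--Schwarz then gives the $\tfrac{1}{k!}$ terms.

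Your Leibniz expansion $\chi^{(k)}=\sum_j\binom{k-1}{j}w^{(j)}w^{(k-1-j)}$ does not obviously yield either mechanism: it produces sums of products of commuting multiplication operators, which does nothing to move a factor to the other side of $B_k$, and it does not explain the disjoint-support cancellation that makes the $\theta_k$ insertion cost only an $\si^{-(n+1)}$ remainder. Your description of $\theta_k$ as ``dominating all the $w^{(i)}$'' is the role played by the functions $\xi_k$ in Proposition~\ref{propSymExp}, not the role of $\theta_k$ here. The missing idea is the \emph{second-level} commutator expansion (applied to $\A(w)$ and to $\A(\theta_k)$) together with the support-disjointness trick; once you have those, the rest is indeed routine power-counting and Cauchy--Schwarz as you say.
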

\begin{proof}
1. We start with a classical commutator expansion,
stated in sufficiently general form in \cite{BFL+}*{Appd.~B}:
Let $\chi$ be a real function with  $\chi'\in C_c^\infty$. Under commutator bound \eqref{commCond}, there holds
\begin{align} i [H_0, \A(\chi)]=&
	\sum_{k=1}^n \frac{\si^{- k}}{k!}\A(\chi^{(k)}) B_{k} + \si^{-(n+1)}R_{n+1}(\chi)\label{commex'}, 
\end{align}
where $R_m$ is given by \eqref{RmDef} with $d\tilde \chi(z)$ satisfying \eqref{dchiEst}. In particular, we have by \eqref{dchiEst} that 
\begin{align}\label{RemEst}
	\norm{R_m(\chi)}\le C_{m,\chi}\norm{B_{m}}.
\end{align}

Adding commutator expansion \eqref{commex'} to its adjoint and dividing the result by two, we obtain
\begin{align}
	i[H_0,\A(\chi)]=&\mathrm{I}+\mathrm{II}+\mathrm{III},\label{gvExp}\\
	\mathrm{I}	=&	\frac{1}{2}\si^{-1} \left(\A(\chi')B_1+B_1^*\A(\chi')\right),\label{leading}
	\\	\mathrm{II}=&\frac{1}{2}\sum_{k=2}^{n}\frac{\si^{-k}}{k!}\left(\A(\chi^{(k)})B_k+B_k^*\A(\chi^{(k)})\right),\label{A-sym-exp}
	\\\mathrm{III}=&{\frac{1}{2}\si^{-(n+1)}\left(R_{n+1}(\chi)+R_{n+1}(\chi)^*\right)}\label{Rem3},
\end{align}
where the term $\mathrm{II}$ is dropped for $n=1$.
	%
	%

2. We first bound the term $\mathrm{I}$ in line \eqref{leading}.  Let $w:=\sqrt{\chi'}$, which is well defined and lies in $C_c^\infty(\Rb)$ by definition \eqref{Xdef}. Then expansion \eqref{commex'} holds for $u$. This expansion, together with the facts that $B_1=i[H,\phi]=B_1^*$, $\Ad_{\phi}^l(B_k)=B_{k+l}$, implies
\begin{align}
	&\quad	\A(\chi')B_1+B_1^*\A(\chi')\notag
	\\	&=\A(w)^2B_1+B_1\A(w)^2\notag\\&=2\A(w)B_1\A(w)+\A(w)[\A(w),B_1]+[B_1,\A(w)]\A(w)\nonumber\\
	&=2\A(w)B_1\A(w)\notag\\&\quad+\sum_{l=1}^{n-1}\frac{\si^{-l}}{l!}\left( {\A(w)B_{1+l} \A(w^{(l)})+\A(w^{(l)})B_{1+l}^{*}\A(w)}\right)\label{518}\\&\quad+\si^{-n}(\A(w)R_n(w)+R_n(w)^{*}\A(w)),\label{519}
\end{align}
where  line \eqref{518} is dropped for $n=1$.

We will bound the terms in \eqref{518}--\eqref{519} using the operator estimate
\begin{align}\label{op-esti}
	P^*Q+Q^*P&\leq P^*P+Q^*Q. 
\end{align}
For the terms in line \eqref{518}, we use \eqref{op-esti} with
$$
P=\A(w),\quad Q:= B_{1+l}\A(w^{(l)}) , 
$$
yielding 
\begin{align}\label{Exp1}
	&\si^{-l}({\A(w)B_{1+l} \A(w^{(l)})+\A(w^{(l)})B_{1+l}^{*}\A(w)})\notag\\\le& \si^{-l}\del{  \A(w)^2+ \A(w^{(l)})B_{1+l}^{*}B_{1+l}(\A(w^{(l)}))^2}.
\end{align}
For the remainder terms in \eqref{519}, we apply \eqref{op-esti} with $$
P= 1,\quad Q=R_n(w)^*\A(w),
$$ to obtain
\begin{align}\label{Exp2}
	\si^{-n}(\A(w)R_n(w)^*+R_n(w)^*\A(w))&\leq \si^{-n}\del{1+\A(w)R_n(w)R_n(w)^*\A(w)}.
\end{align}
Combining \eqref{Exp1} and \eqref{Exp2} in \eqref{leading} and writing $k=l+1$ yields 
\begin{align}\label{Exp4}
	&\mathrm{I}\le  \si^{-1} \A(w)B_1\A(w)\\&\quad+\frac12\sum_{k=2}^{n}\frac{\si^{-k}}{(k-1)!}\del{  \A(w)^2+ \A(w^{(k-1)})B_{k}^{*}B_{k}\A(w^{(k-1)})}+\frac12\si^{-(n+1)}\del{1+\A(w)R_n(w)R_n(w)^*\A(w)}.\notag
\end{align}
This bound the term $\mathrm{I}$ \eqref{leading}.

%

3. For $n\ge2$, the term $\mathrm{II}$ in line \eqref{A-sym-exp}  is bounded similarly as in Step 2. For $k=2,...,n$, we take $\theta_k\in\cY$ (see \eqref{Ydef}) with
\begin{align}
	\label{thetaCond}
 \theta_k\equiv 1 \text{ on }\supp\chi^{(k)}.
\end{align}
Indeed, this is possible since $\supp\chi'\subset (0,\eps)$ by \eqref{Xdef}

We first show that 
\begin{align}\label{527}
	\A(\chi^{(k)})B_k=\A(\chi^{(k)})B_{k}\A(\theta_k)+ \si^{-(n+1-k)}\Rem_k.
\end{align}
By relation \eqref{thetaCond}, we have $\chi^{(k)}=\chi^{(k)}\theta_k=\theta_k\chi^{(k)}$. Using this, commutator expansion \eqref{commex'},  and the fact that $\Ad_{\phi}^l(B_k)=B_{k+l}$, we find
\begin{align}
	&\quad\A(\chi^{(k)})B_k\notag\\&=\A(\chi^{(k)})\A(\theta_k)B_k\notag\\&=\A(\chi^{(k)})B_{k}\A(\theta_k)+\A(\chi^{(k)})[\A(\theta_k),B_k]\nonumber\\
	&=\A(\chi^{(k)})B_k\A(\theta_k)\notag\\&\quad+\sum_{l=1}^{n-k}\frac{\si^{-l}}{l!}\A(\chi^{(k)})\A(\theta_k^{(l)})B_{k+l}+\si^{-(n+1-k)}\A(\chi^{(k)})R_{n-k+1}(\theta_k),\label{531}
\end{align}
where the $l$-sum is dropped for $k=n$ 

Since $\theta_k\equiv 1$ on $\supp(\chi^{(k)})$ by \eqref{thetaCond}, we have $\supp(\theta_k^{(l)})\cap\supp(\chi^{(k)})=\emptyset$ for all $l\geq 1$. Therefore, in line \eqref{531}, 
\begin{align*}
	\A(\chi^{(k)}) \A(\theta_k^{(l)})B_{k+l}=0,\quad l=1,\ldots n-k.
\end{align*}
Thus the $l$-sum in \eqref{531} is dropped, and 
estimate \eqref{527} follows from here. 

Adding \eqref{527} to its adjoint yields
\begin{align}\label{526}
	&\si^{-k}\del{\A(\chi^{(k)})B_k+B_k^*\A(\chi^{(k)})}\\=&\notag \si^{-k}\del{\A(\chi^{(k)})B_k\A(\theta_k)+\A(\theta_k)B_k^*\A(\chi^{(k)})}+\si^{-(n+1)}(\A(\chi^{(k)})R_{n-k+1}(\theta_k)+R_{n-k+1}(\theta_k)^*\A(\chi^{(k)})).
\end{align}
We apply estimate \eqref{op-esti} on the r.h.s. of \eqref{526} as in Step 1, and then sum over $k$ to obtain
\begin{align}\label{Exp3}
	\mathrm{II}	\leq&\frac12\sum_{k=2}^{n} \frac{\si^{-k}}{k!}\left((\A(\chi^{(k)}))^2+\A(\theta_k)B_k^*B_k\A(\theta_k)\right)\notag\\&+\frac12\si^{-(n+1)}\sum_k(1+\A(\chi^{(k)})R_{n-k+1}(\theta_k)R_{n-k+1}(\theta_k)^*\A(\chi^{(k)})).
\end{align}
This bounds the term  $\mathrm{II}$ in line \eqref{A-sym-exp}.


4. 
Plugging \eqref{Exp4}, \eqref{Exp3}, and \eqref{Rem3}  back to \eqref{gvExp} 
we conclude the desired inequality \eqref{535}.

\end{proof}

\bibliography{DynLocNLSBib}
\end{document}